\documentclass[11pt,a4paper]{article}

\usepackage{amsmath,amssymb,amsthm}
\usepackage[margin=25mm]{geometry}
\usepackage[numbers,sort&compress]{natbib}
\usepackage{bm}
\usepackage{graphicx}
\graphicspath{{./}{figures/}{upload/}}
\usepackage{booktabs}
\usepackage{url}
\usepackage{pgfplots}
\pgfplotsset{compat=1.16}

\newtheorem{prop}{Proposition}

\newtheorem{rem}{Remark}
\newtheorem{defn}{Definition}

\newcommand{\HH}{\mathcal{H}}
\newcommand{\Tr}{\operatorname{Tr}}
\newcommand{\ImPart}{\operatorname{Im}}
\newcommand{\RePart}{\operatorname{Re}}
\newcommand{\PV}{\operatorname{P}}
\newcommand{\dd}{\,d}
\newcommand{\bh}{\mathrm{BH}}
\newcommand{\inr}{\mathrm{int}}
\newcommand{\out}{\mathrm{out}}

\begin{document}

\title{Causality, Redshift, and Matching-Scale Cancellation for Reflective Black-Hole Horizons}

\author{Masahiro Kaminaga\\[0.5ex]
\small Faculty of Engineering, Tohoku Gakuin University, Sendai 984-8588, Japan}

\date{\empty}

\maketitle

\begin{abstract}
The Boltzmann flux reflectivity used in quantum-horizon models does not by itself define a causal black-hole boundary response.
We study two related questions for stationary Schwarzschild scattering.
First, a classical Hardy-space condition excludes an exact Boltzmann modulus on the entire positive frequency axis for any nonzero bounded causal reflection amplitude at a fixed matching plane.
We construct causal completions that preserve the modulus on any prescribed finite band.
Second, we prove a low-frequency outer-phase expansion for a class of completions with fixed profiles and a common Boltzmann slope, including an exact-band completion and a smooth crossover completion.
The logarithmic coefficient is common to this class, while the linear constant is an explicit functional of the completion.
When the fixed proper-frame cutoff is redshifted to the matching surface and the local temperature is the Tolman redshift of the Hawking temperature, the outer phase cancels the leading logarithmic dependence of the propagation phase on the matching distance.
The geometric remainder is quadratic in the proper distance, while a separate dispersive remainder depends on the completion.
This is a low-energy comparison of specified response families, not an exact matching invariance or a phase fixed by causality alone.
In an axial Schwarzschild example at a fixed matching plane,
accounting for the predicted linear-delay difference reduces the
maximum relative cavity-factor difference between two cutoffs
from about $21\%$ to $0.11\%$ on the sampled frequency band.
A passive quantum completion supplies the associated absorption-noise relation and checks the consistency of the asymptotic output.
\end{abstract}

\section{Introduction}
\label{sec:introduction}

Gravitational-wave echo models replace the purely ingoing black-hole boundary condition by a near-horizon reflection amplitude.
Repeated propagation between that region and the photon-sphere barrier produces an echo transfer function and a modified resonance spectrum \cite{Cardoso2016,Mark2017,CardosoPani2019}.
The response also depends on the source and on the separation of prompt and delayed propagation \cite{KaminagaEcho2026,ZhangRen2026}.
Effective boundary actions organize the low-energy boundary data \cite{Burgess2018}, while quantum-horizon and stretched-horizon models suggest possible reflection and dissipation mechanisms \cite{OshitaWangAfshordi2020,WangOshitaAfshordi2020,ManikandanRajeev2022,ChakrabortyEtAl2022,Terashima2025,Terashima2026}.

A central example is the Boltzmann flux reflectivity $\mathcal R_{\rm flux}=e^{-\beta_H|\widetilde\omega|}$ obtained by Oshita, Wang, and Afshordi \cite{OshitaWangAfshordi2020} and used in echo calculations \cite{WangOshitaAfshordi2020}.
Earlier horizon-continuation arguments also give a Boltzmann modulus \cite{KuchievFlambaum2005}, and the prescription appears in black mirror phenomenology \cite{AmaroSeoane2025}.
A modulus, however, does not specify a causal response between local scattering ports.
We ask whether the thermal law admits such a response globally, and which part of the resulting echo phase can be attributed to a matching position or an ultraviolet cutoff.

The first result is a global obstruction together with a finite-band construction.
Proposition~\ref{prop:boltzmann_obstruction} applies a classical Hardy-space theorem \cite{Garnett2007} to show that a nonzero $H^\infty(\mathbb C_+)$ reflection amplitude at a fixed plane cannot have the exact Boltzmann modulus at all positive frequencies.
This does not invalidate a low-frequency thermal prescription.
In fact, causal spectral factors can preserve that modulus exactly on any prescribed finite band.
We compare this construction with a smooth ultraviolet crossover.

The second result concerns a class of ultraviolet completions with the same low-frequency Boltzmann slope.
We select their outer factors, without an additional inner factor.
Proposition~\ref{prop:general_outer_phase} gives a common logarithmic coefficient, an explicit completion-dependent linear constant, and a controlled remainder under stated regularity and tail conditions.
The two constructions above are examples of this class.
If the response is specified locally, a fixed proper-frame scale $\Omega_{\rm UV}$ appears at radius $r_0$ as $\omega_c=\sqrt{f(r_0)}\Omega_{\rm UV}$; the analogous relation holds for the finite-band edge.
For the specified outer-response families with a fixed proper-frame
cutoff, the leading matching-distance term in the combined reflection
and propagation phase is
$(\beta/\pi-8M)\omega\log[d_0/(4M)]$.
With the local temperature related by Tolman redshift, this term
vanishes at the Hawking value $\beta=8\pi M$.
Proposition~\ref{prop:matching_scale_cancellation} and Eq.~(\ref{eq:band_matching_cancellation}) give the geometric and dispersive remainders.
The calculation compares redshifted response families.
It is distinct from exactly transporting the boundary data of one fixed physical system to a different reference plane.

The distinction from earlier results is important.
Burgess, Plestid, and Rummel derive the boundary-coupling flow that removes dependence on an arbitrary matching position \cite{Burgess2018}.
Oshita, Wang, and Afshordi obtain a logarithmic echo delay, which in Schwarzschild has the coefficient $8M$, in their dissipative model \cite[Eq.~(2.17)]{OshitaWangAfshordi2020}.
Our contribution is not the existence of either matching invariance or logarithmic delay.
It is the connection between the global causal restriction, explicit finite-band completions, and a phase expansion that separates the common logarithmic term from completion-dependent constants and remainders under explicit assumptions.
Causality permits additional inner factors and does not fix the full reflection phase.

We propagate the response through the axial Schwarzschild barrier and quantify how much of a fixed-plane cutoff change is carried by a linear delay.
A compact passive quantum completion then gives the thermal output through the same cavity.
This uses established input-output and quantum-scattering theory \cite{Beenakker1998,Caves1982,GardinerCollett1985,GardinerZoller}; the absorption-noise identities are consistency relations, not additional microscopic predictions of the horizon model.
Details on normalization, response kernels, more general inputs, stability, and an explicit unitary realization are collected in the appendices.

Section~\ref{sec:schwarzschild} specifies the fixed-plane response.
Section~\ref{sec:causal_completion} gives the causal obstruction and the two completions.
Section~\ref{sec:matching} derives their phases, the extension to a class of completions, and the matching-scale dependence.
Section~\ref{sec:composition} presents the Schwarzschild application and the passive thermal output.
Section~\ref{sec:conclusion} discusses the scope of the results.
We use $c=\hbar=G=k_B=1$ and the Fourier convention $e^{-i\omega t}$.

\section{A causal response at a fixed matching plane}
\label{sec:schwarzschild}

\subsection{Schwarzschild scattering and local ports}
\label{sec:channel_geometry}

For Schwarzschild mass $M$, define
\begin{equation}
f(r)=1-\frac{2M}{r},
\qquad
\frac{dr_*}{dr}=\frac{1}{f(r)}.
\label{eq:tortoise}
\end{equation}
Scalar, electromagnetic, and axial gravitational perturbations can be written in the form \cite{ReggeWheeler1957,Chandrasekhar1983}
\begin{equation}
\left[
\frac{d^2}{dr_*^2}
+\omega^2-V_{s\ell}(r)
\right]\psi_{s\ell\omega}=0,
\label{eq:rw_general}
\end{equation}
where
\begin{equation}
V_{s\ell}(r)
=f(r)
\left[
\frac{\ell(\ell+1)}{r^2}
+\frac{2M(1-s^2)}{r^3}
\right].
\label{eq:rw_potential}
\end{equation}
For $s=2$ this is the Regge--Wheeler potential for axial gravitational perturbations.

We use canonically normalized master modes; their action, commutators, and energy normalization are given in Appendix~\ref{app:normalization}.
Figure~\ref{fig:channel_schematic} defines the scattering ports.
At the matching plane, $a_-$ travels toward the unresolved near-horizon region and $a_+$ returns toward the barrier.
The modes $a_{\rm in}$ and $a_{\out}$ describe incidence from and propagation to infinity.
The extra port in the right panel supplies the passive quantum completion of Sec.~\ref{sec:quantum_completion}.

\begin{figure}[!htbp]
\centering
\includegraphics[width=\linewidth]{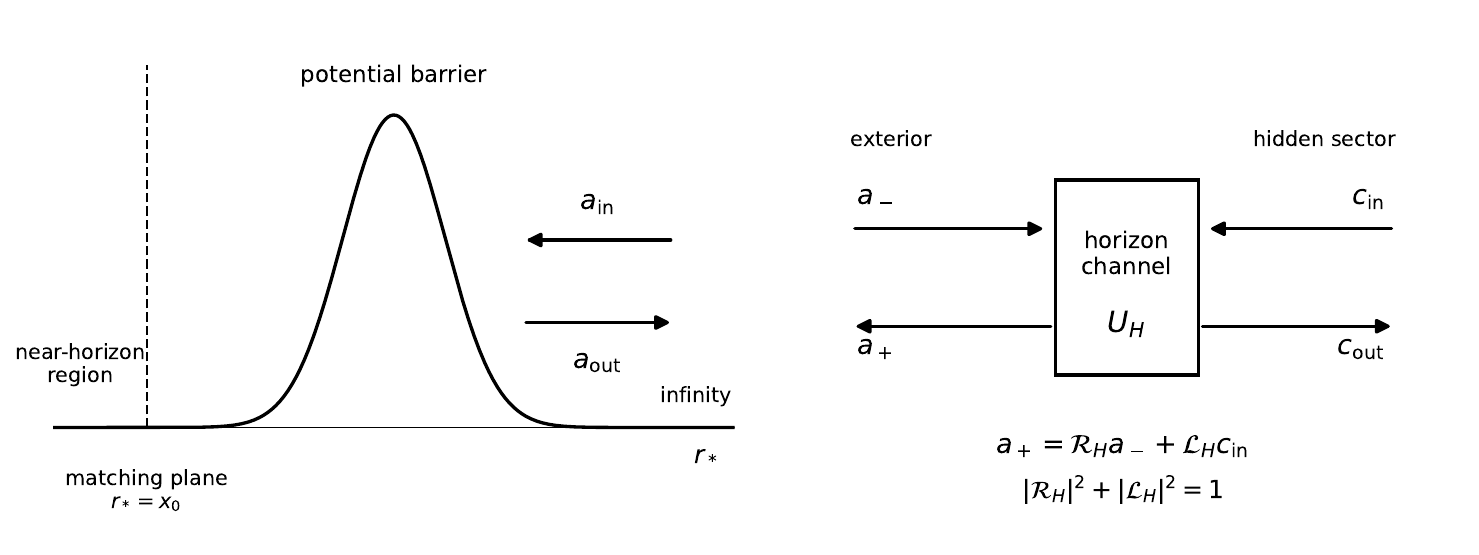}
\caption{Scattering geometry and a passive horizon-channel realization.
Left: the potential barrier, effective matching plane, and asymptotic modes; the potential profile is schematic.
Right: the joint unitary $U_H$ acts on the local exterior input $a_-$ and an effective environment input $c_{\rm in}$, with outputs $a_+$ and $c_{\rm out}$.
Tracing over the environment output gives the reduced exterior channel.
The displayed relation is its passive single-environment-port case.
The arrows in the right panel indicate channel flow, not the direction of the radial coordinate in the left panel.}
\label{fig:channel_schematic}
\end{figure}

We introduce a matching surface at
\begin{equation}
r_*=x_0,
\qquad
x_0\ll -M,
\label{eq:x0}
\end{equation}
where the Regge--Wheeler potential is exponentially small.
No material surface is assumed; $x_0$ is an effective matching scale.
To define an exact scattering problem at a finite plane, one can set the exterior potential to zero for $r_*<x_0$ and keep the Schwarzschild potential for $r_*\geq x_0$.
The plane-wave decomposition and the composition formulas are then exact for this matched model.
Replacing its amplitudes by the usual horizon-asymptotic Schwarzschild amplitudes is a near-horizon approximation, whose numerical size is checked in Appendix~\ref{app:numerical}.
On the free side of the matching plane,
\begin{equation}
\psi
=A_-e^{-i\omega(r_*-x_0)}
+A_+e^{i\omega(r_*-x_0)}.
\label{eq:near_horizon_modes}
\end{equation}
A frequency-dependent Robin condition
\begin{equation}
\psi'(x_0)=\Lambda(\omega)\psi(x_0)
\label{eq:admittance_bc}
\end{equation}
gives
\begin{equation}
{\cal R}_H(\omega)
:=\frac{A_+}{A_-}
=
\frac{i\omega+\Lambda(\omega)}
{i\omega-\Lambda(\omega)}.
\label{eq:RH_Lambda}
\end{equation}
The purely ingoing classical black-hole condition corresponds to
\begin{equation}
\Lambda_{\bh}(\omega)=-i\omega,
\label{eq:Lambda_BH}
\end{equation}
for which ${\cal R}_H=0$.

\subsection{Packet passivity and the analytic response}

For real $\omega>0$, the incoming and outgoing fluxes are proportional to $|A_-|^2$ and $|A_+|^2$.
Direct subtraction in Eq.~(\ref{eq:RH_Lambda}) gives
\begin{equation}
1-|{\cal R}_H(\omega)|^2
=-\frac{4\omega\ImPart\Lambda(\omega)}{|i\omega-\Lambda(\omega)|^2}.
\label{eq:passivity_identity}
\end{equation}
Thus $\ImPart\Lambda\leq0$ implies $|{\cal R}_H|\leq1$ on the positive real axis.
This real-frequency bound does not by itself establish causality or exclude an exponentially growing mode.

\begin{defn}[Admissible passive horizon response]
\label{def:admissible_response}
A scalar stationary horizon response is called admissible in this paper if
${\cal R}_H(z)$ is analytic and bounded for $\ImPart z>0$, has boundary
values satisfying $|{\cal R}_H(\omega)|\leq1$ for real $\omega$, and obeys
the reality condition
\begin{equation}
{\cal R}_H(-z^*)
=
{\cal R}_H(z)^*,
\qquad \ImPart z\geq0.
\label{eq:RH_reality}
\end{equation}
Boundary values are understood almost everywhere.
Dispersion relations for the admittance require separate subtraction conditions.
\end{defn}

We use the standard Hardy spaces on the upper half-plane
$\mathbb{C}_+=\{z:\ImPart z>0\}$.
The space $H^2(\mathbb{C}_+)$ consists of analytic functions $F$ satisfying
\[
\sup_{y>0}\int_{-\infty}^{\infty}|F(x+iy)|^2\,dx<\infty,
\]
while $H^\infty(\mathbb{C}_+)$ consists of bounded analytic functions on
$\mathbb{C}_+$, with
\[
\|F\|_{H^\infty}
=
\sup_{\ImPart z>0}|F(z)|.
\]
Thus the analytic boundedness condition in Definition~\ref{def:admissible_response}
states that ${\cal R}_H\in H^\infty(\mathbb{C}_+)$.

The $H^\infty$ assumption has a time-domain meaning.
For a classical chiral component of the canonical field, the energy crossing the plane is proportional to $\int |\partial_t\Psi_\pm(t)|^2dt$.
Thus the energy-normalized signals $v_\pm=\partial_t\Psi_\pm$ have the same reflection multiplier as $\Psi_\pm$.
A causal, stationary boundary initially at rest and passive for every incident packet defines a convolution map $C_H$ satisfying
\begin{equation}
v_+=C_Hv_-,\qquad
\|C_Hv_-\|_{L^2}\leq\|v_-\|_{L^2}.
\label{eq:time_domain_passivity}
\end{equation}
On signals supported in $t\geq0$, the Fourier--Laplace transform identifies this map with a bounded multiplier of $H^2(\mathbb C_+)$.
Such multipliers are precisely $H^\infty$ functions.
In particular, if $k_z$ is the Hardy reproducing kernel, then $M_{{\cal R}_H}^*k_z=\overline{{\cal R}_H(z)}k_z$ gives $|{\cal R}_H(z)|\leq\|C_H\|\leq1$.
This connects Definition~\ref{def:admissible_response} to causal packet passivity, not just to a real-axis test.
The coherent response is taken with zero boundary initial excitation; thermal inputs are added separately in Sec.~\ref{sec:quantum_completion}.

The response can also be expressed as the admittance
\begin{equation}
\Lambda(z)=iz\frac{{\cal R}_H(z)-1}{{\cal R}_H(z)+1},
\label{eq:Lambda_from_RH}
\end{equation}
where the denominator is nonzero.
The equivalent retarded self-energy and its dispersion conditions are discussed in Appendix~\ref{app:response}.
In the following we impose bounded causal passivity on ${\cal R}_H$ itself.

\section{Causality and ultraviolet completion}
\label{sec:causal_completion}

\subsection{Obstruction to a global Boltzmann modulus}

Let $\mathbb C_+=\{z:\ImPart z>0\}$ and let $\beta>0$.
The following constraint applies to the amplitude of a stationary response at a fixed matching plane, with the same Killing time at input and output.

\begin{prop}[Causal obstruction to a global Boltzmann modulus]
\label{prop:boltzmann_obstruction}
There is no nonzero ${\cal R}_H\in H^\infty(\mathbb C_+)$ whose boundary values satisfy
\begin{equation}
|{\cal R}_H(\omega)|^2=e^{-\beta\omega}
\label{eq:global_boltzmann_modulus}
\end{equation}
for almost every $\omega>0$.
\end{prop}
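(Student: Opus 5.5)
The plan is to use the classical log-integrability condition for nonzero $H^\infty(\mathbb C_+)$ functions \cite{Garnett2007}. If $F\in H^\infty(\mathbb C_+)$ is not identically zero, then its boundary values satisfy
\begin{equation}
\int_{-\infty}^{\infty}\frac{\bigl|\log|F(\omega)|\bigr|}{1+\omega^2}\,d\omega<\infty .
\label{eq:log_integrability_plan}
\end{equation}
Only the negative part of $\log|F|$ is nontrivial here, since $\log|F|\leq\log\|F\|_{H^\infty}$ bounds it from above. I will state this as the half-plane transcription of the disk theorem. Under the Cayley map $w=(z-i)/(z+i)$, the function $G=F\circ(\text{inverse Cayley map})$ lies in $H^\infty(\mathbb D)$. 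The disk statement $\int_0^{2\pi}\log|G(e^{i\theta})|\,d\theta>-\infty$ for $G\not\equiv0$ follows from Jensen's inequality, applied after dividing by a power of $w$ if $G(0)=0$. Since $d\theta=2\,d\omega/(1+\omega^2)$, this gives exactly \eqref{eq:log_integrability_plan}.

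Next, suppose $\mathcal R_H\in H^\infty(\mathbb C_+)$ is nonzero and its modulus obeys Eq.~(\ref{eq:global_boltzmann_modulus}) almost everywhere on $\omega>0$. Then $\log|\mathcal R_H(\omega)|=-\beta\omega/2$ there. The positive half-line therefore contributes
\[
\int_0^\infty\frac{\beta\omega/2}{1+\omega^2}\,d\omega=\infty,
\]
a logarithmic divergence at large $\omega$. The integrand in \eqref{eq:log_integrability_plan} is nonnegative, so no contribution from $\omega<0$ can cancel this divergence, whatever the behavior of $\mathcal R_H$ there. This contradicts \eqref{eq:log_integrability_plan}, so $\mathcal R_H\equiv0$. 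The argument does not use the reality condition (\ref{eq:RH_reality}) or the passivity bound; only boundedness and analyticity on $\mathbb C_+$ enter. The obstruction is therefore stronger than Definition~\ref{def:admissible_response} requires.

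The only step needing care is the justification of \eqref{eq:log_integrability_plan}, since everything else is a direct computation. For a self-contained argument I would go through the disk. Write $G=w^kG_1$ with $G_1(0)\neq0$. Subharmonicity of $\log|G_1|$ on circles of radius $r<1$ gives $\int\log|G_1(re^{i\theta})|\,d\theta\geq2\pi\log|G_1(0)|$. Fatou's lemma, applied to the nonnegative functions $\log\|G\|_\infty-\log|G_1(re^{i\theta})|$, then passes this bound to the a.e. radial limits as $r\to1$. The existence of those boundary values is standard for $H^\infty$. One point to note in the write-up is that the boundary values of $F$ in the statement are nontangential limits, which correspond to the radial limits of $G$ under the Cayley map, so the two notions of boundary value agree almost everywhere.
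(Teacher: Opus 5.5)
Your proposal is correct and is essentially the paper's proof. The paper also invokes the Hardy-space condition $\int_{-\infty}^{\infty}\mu\log|\mathcal R_H(\omega)|\,(\mu^2+\omega^2)^{-1}\,d\omega>-\infty$ from Garnett, notes that $\log|\mathcal R_H|=-\beta\omega/2$ drives the weighted integral over $\omega>0$ to $-\infty$, and uses boundedness to exclude compensation from $\omega<0$, while your Cayley-map and Jensen--Fatou derivation makes self-contained what the paper cites (its one tacit step, $\|G_1\|_\infty=\|G\|_\infty$, which gives the nonnegativity needed in Fatou's lemma, follows from the maximum principle).
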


\begin{proof}
For a nonzero bounded analytic function, the logarithmic boundary integral satisfies \cite[Chap.~II, Sec.~4]{Garnett2007}
\begin{equation}
\int_{-\infty}^{\infty}
\frac{\mu\log|{\cal R}_H(\omega)|}{\mu^2+\omega^2}\,d\omega>-\infty,
\qquad \mu>0.
\label{eq:Hardy_log_condition}
\end{equation}
On the positive half-axis, Eq.~(\ref{eq:global_boltzmann_modulus}) gives $\log|{\cal R}_H|=-\beta\omega/2$, whose weighted integral diverges to $-\infty$.
The positive part of the integrand is integrable because ${\cal R}_H$ is bounded, so the negative half-axis cannot compensate for this divergence.
\end{proof}

Reality is not needed for this obstruction.
For a real boundary response, it also gives $\log|{\cal R}_H(\omega)|=-\beta|\omega|/2$ on the full real axis.
The conclusion constrains a causal transfer amplitude; a thermal state or a detailed-balance ratio of transition rates is not subject to the same requirement.
In particular, the low-frequency and near-horizon assumptions of Ref.~\cite{OshitaWangAfshordi2020}, Secs.~2.1 and 2.3, do not construct a fixed-plane amplitude at arbitrarily high frequency.
In Ref.~\cite{AmaroSeoane2025}, Sec.~III, the discussion following its Boltzmann reflectivity formula associates $R\leq1$ with causality and stability.
This inequality tests real-frequency gain, whereas Eq.~(\ref{eq:Hardy_log_condition}) is an additional restriction on a global causal response.
The proposition addresses a global extension of that prescription and leaves its finite-band use open.

Kuchiev and Flambaum derive $|R|=\exp[-\omega/(2T_H)]$ by analytically continuing the Schwarzschild scattering problem with $r_g\to r_g+i0$ and matching solutions through the horizon \cite{KuchievFlambaum2005}.
Their causality argument concerns analyticity of the scattering matrix and the continuation prescription.
It does not establish that this horizon coefficient is a bounded Fourier--Laplace multiplier for a causal passive convolution between local ports at a fixed finite plane, measured in the same Killing time.
That stronger property is the hypothesis of Proposition~\ref{prop:boltzmann_obstruction}.
If their modulus is identified with such a response over the entire positive axis, the proposition rules out that identification.
This conditional statement does not diagnose which premise fails in their construction, or invalidate its scattering interpretation.
Relating the two descriptions requires specifying the local ports and verifying the packet-passivity bound in Eq.~(\ref{eq:time_domain_passivity}); analyticity alone does not establish boundedness.

\subsection{Causal spectral factors and exact finite-band matching}

For an even function $0<W(\omega)<1$ almost everywhere, suppose that both $\log W$ and $\log(1-W)$ are integrable with weight $(\mu^2+\omega^2)^{-1}$.
Define the corresponding upper-half-plane outer function \cite[Chap.~II, Sec.~4]{Garnett2007}
\begin{equation}
{\cal O}[u](z)=\exp\left\{
\frac{1}{\pi i}\int_{-\infty}^{\infty}
\left(\frac{1}{\nu-z}-\frac{\nu}{\nu^2+\mu^2}\right)u(\nu)\,d\nu
\right\},\qquad z\in\mathbb C_+ .
\label{eq:outer_function}
\end{equation}
Taking $u_R=\frac12\log W$ and $u_L=\frac12\log(1-W)$ gives bounded analytic amplitudes ${\cal O}[u_R]$ and ${\cal O}[u_L]$ with squared moduli $W$ and $1-W$.
Even data allow reality-preserving phases.
This is a causal output row with unit real-frequency norm.
A complete causal unitary scattering realization additionally includes an environment output; an explicit rational example is given in Appendix~\ref{app:rational}.

For any band edge $\omega_b>0$, consider
\begin{equation}
W_b(\omega)=
\begin{cases}
e^{-\beta|\omega|},& |\omega|\leq\omega_b,\\
e^{-\beta\omega_b}(|\omega|/\omega_b)^{-\beta\omega_b},& |\omega|>\omega_b.
\end{cases}
\label{eq:exact_band_modulus}
\end{equation}
The function and its first derivative match at $|\omega|=\omega_b$.
Its ultraviolet logarithm grows only as $-\beta\omega_b\log(|\omega|/\omega_b)$.
At zero, $\log(1-W_b)=\log(\beta|\omega|)+O(|\omega|)$ is locally integrable.
Both spectral factors therefore exist.
This construction preserves the exact Boltzmann modulus throughout the chosen band.
In the passive KMS channel derived below, it also preserves the thermal equality in that band.
There is no universal nonzero in-band correction forced by Proposition~\ref{prop:boltzmann_obstruction}; the ultraviolet completion still fixes part of the phase.

\subsection{A smooth crossover family}

To examine the dependence on a single crossover scale, we use
\begin{equation}
W_{\rm UV}(\omega)=\left(1+\frac{|\omega|}{\omega_c}\right)^{-\beta\omega_c},
\qquad \omega_c>0.
\label{eq:UV_flux_reflectivity}
\end{equation}
For fixed positive $\omega/\omega_c\ll1$,
\begin{equation}
\log W_{\rm UV}(\omega)
=-\beta\omega+O\left(\frac{\beta\omega^2}{\omega_c}\right).
\label{eq:UV_low_frequency_match}
\end{equation}
The ultraviolet tail is polynomial, so the data
\begin{equation}
u_R(\nu)=\frac12\log W_{\rm UV}(\nu),\qquad
u_L(\nu)=\frac12\log[1-W_{\rm UV}(\nu)]
\label{eq:outer_boundary_data}
\end{equation}
satisfy the required integrability conditions.
We set
\begin{equation}
{\cal R}_{H,{\rm UV}}={\cal O}[u_R],\qquad
{\cal L}_{H,{\rm UV}}={\cal O}[u_L],
\label{eq:UV_outer_pair}
\end{equation}
which gives
\begin{equation}
|{\cal R}_{H,{\rm UV}}|^2=W_{\rm UV},\qquad
|{\cal L}_{H,{\rm UV}}|^2=1-W_{\rm UV}.
\label{eq:UV_passive_pair}
\end{equation}
We select the outer reflection with no additional inner factor and with its continuous phase tending to zero at $\omega=0^+$.
Multiplication by a bounded analytic inner factor, whose boundary modulus is one almost everywhere, such as $e^{i\tau\omega}$ with $\tau\geq0$, preserves the modulus and changes the delay.
Both completions use this outer-response assumption, which is
additional to causality and the prescribed modulus.
We distinguish varying the matching surface from varying the cutoff at a fixed plane.
The cusp at zero, and the band-edge joins in Eq.~(\ref{eq:exact_band_modulus}), preclude a global meromorphic continuation.
The complex-pole statements in Appendix~\ref{app:poles} are restricted to local continuation domains avoiding such points.

\section{Redshift, outer phase, and matching-scale dependence}
\label{sec:matching}

\subsection{Redshift of a local ultraviolet scale}
\label{sec:redshift_local_cutoff}

The crossover scale $\omega_c$ in Eq.~(\ref{eq:UV_flux_reflectivity}) has so far been defined with respect to the Schwarzschild Killing time.
It acquires a direct near-horizon meaning if the effective response is specified in the local proper frame at the matching surface.
Let $r_0>2M$ be the areal radius corresponding to $x_0=r_*(r_0)$ and put
\begin{equation}
f_0=f(r_0).
\label{eq:f0_definition}
\end{equation}
A static local observer has proper time $d\tau=\sqrt{f_0}\,dt$, so a mode with Killing frequency $\omega$ has local frequency
\begin{equation}
\omega_{\rm loc}=\frac{\omega}{\sqrt{f_0}}.
\label{eq:local_frequency_redshift}
\end{equation}
For a KMS state at the Hawking temperature
$\beta_H=8\pi M$ \cite{Hawking1975}
with respect to Killing time, the Tolman--Ehrenfest
relation \cite{TolmanEhrenfest1930} gives
\begin{equation}
\beta_{\rm loc}=\beta_H\sqrt{f_0},
\qquad
\beta_H=8\pi M.
\label{eq:Tolman_beta}
\end{equation}

Suppose that the smooth local flux reflectivity is written in the proper frame as
\begin{equation}
W_{\rm loc}(\omega_{\rm loc})
=\left(1+\frac{|\omega_{\rm loc}|}{\Omega_{\rm UV}}\right)^{-\beta_{\rm loc}\Omega_{\rm UV}},
\label{eq:local_UV_reflectivity}
\end{equation}
where $\Omega_{\rm UV}$ is a fixed local crossover scale.
Then its expression in Killing frequency is exactly Eq.~(\ref{eq:UV_flux_reflectivity}) with $\beta=\beta_H$ and
\begin{equation}
\omega_c(r_0)=\sqrt{f_0}\,\Omega_{\rm UV}.
\label{eq:redshifted_cutoff}
\end{equation}

This follows directly from $\beta_{\rm loc}\omega_{\rm loc}=\beta_H\omega$.
For the exact-band prescription, a fixed local band edge $\Omega_b$ similarly gives
\begin{equation}
\omega_b(r_0)=\sqrt{f_0}\,\Omega_b.
\label{eq:redshifted_band_edge}
\end{equation}
The Tolman relation fixes the redshift of an equilibrium temperature.
Its value at infinity is a separate input, which we set to the Hawking temperature.
Neither assumption derives the reflectivity law.

To display the near-horizon scaling, define the proper distance from the event horizon to the matching surface by
\begin{equation}
d_0=\int_{2M}^{r_0}\frac{dr}{\sqrt{f(r)}}
\label{eq:proper_matching_distance}
\end{equation}
and set
\begin{equation}
y=\sqrt{\frac{r_0-2M}{2M}},
\qquad
\zeta=\frac{d_0}{4M}.
\label{eq:y_zeta_definition}
\end{equation}
The Schwarzschild geometry gives the exact relations
\begin{eqnarray}
d_0&=&2M\left[y\sqrt{1+y^2}+\operatorname{arcsinh}y\right],
\label{eq:proper_distance_exact}\\
\sqrt{f_0}&=&\frac{y}{\sqrt{1+y^2}},
\label{eq:redshift_y}\\
x_0&=&2M\left(1+y^2+2\log y\right).
\label{eq:x0_y_exact}
\end{eqnarray}
Series inversion at $\zeta\ll1$ gives
\begin{eqnarray}
\sqrt{f_0}
&=&\zeta-\frac{2}{3}\zeta^3+\frac{11}{15}\zeta^5+O(\zeta^7),
\label{eq:redshift_proper_series}\\
\frac{x_0}{M}
&=&2+4\log\zeta+\frac{4}{3}\zeta^2-\frac{13}{45}\zeta^4+O(\zeta^6).
\label{eq:tortoise_proper_series}
\end{eqnarray}
Consequently,
\begin{equation}
\omega_c
=\frac{d_0\Omega_{\rm UV}}{4M}
\left[1-\frac{2}{3}\left(\frac{d_0}{4M}\right)^2
+O\left(\frac{d_0^4}{M^4}\right)\right].
\label{eq:cutoff_proper_distance}
\end{equation}
Thus the dimensionless combination $d_0\Omega_{\rm UV}$ controls the leading redshifted crossover.
This relation also shows why $\omega_c$ cannot be held fixed along this response family when the matching plane is varied at fixed local cutoff.

\subsection{Outer phases of the two completions}
\label{sec:UV_phase}

For even boundary data the continuous outer phase is
\begin{equation}
\phi_c(\omega)=\frac{2\omega}{\pi}
\int_0^\infty\frac{u_R(\nu)-u_R(\omega)}{\omega^2-\nu^2}\,d\nu,
\qquad \omega>0.
\label{eq:outer_phase_quadrature}
\end{equation}
The singularity at $\nu=\omega$ is removable.
The subscript $c$ denotes the scale $\omega_c$.

\begin{prop}[Dispersive delay for the smooth completion]
\label{prop:UV_phase_delay}
For Eq.~(\ref{eq:UV_flux_reflectivity}) with the specified outer phase,
\begin{eqnarray}
\tau_H(\omega;\omega_c):=\frac{d\phi_c}{d\omega}
&=&\frac{\beta}{\pi}
\frac{\log(\omega_c/\omega)}{1-(\omega/\omega_c)^2},
\label{eq:UV_group_delay}\\
\phi_c(\omega)&=&\frac{\beta\omega_c}{\pi}
\int_0^{\omega/\omega_c}\frac{\log(1/t)}{1-t^2}\,dt.
\label{eq:UV_phase_integral}
\end{eqnarray}
At $\omega=\omega_c$, the removable limit of $\tau_H$ is $\beta/(2\pi)$.
In particular, $\tau_H>0$ for $\omega>0$.
\end{prop}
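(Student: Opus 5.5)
I will work directly from the quadrature formula Eq.~(\ref{eq:outer_phase_quadrature}) with $u_R(\nu)=-\tfrac{\beta\omega_c}{2}\log(1+\nu/\omega_c)$ for $\nu\ge0$. The plan is to first obtain the phase in closed integral form and then differentiate to get the delay.

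\textbf{Rescaling.} Put $\nu=\omega_c s$ and $x=\omega/\omega_c$. Then
\[
\phi_c(\omega)=\frac{\beta\omega_c}{\pi}\,x\int_0^\infty\frac{\log(1+s)-\log(1+x)}{s^2-x^2}\,ds=:\frac{\beta\omega_c}{\pi}\,\Phi(x).
\]
This reduces everything to showing $\Phi'(x)=\log(1/x)/(1-x^2)$ with $\Phi(0)=0$. Eq.~(\ref{eq:UV_phase_integral}) then follows by integration, and Eq.~(\ref{eq:UV_group_delay}) follows from $d/d\omega=\omega_c^{-1}d/dx$.

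\textbf{Evaluating $\Phi'$.} Rather than differentiate under the integral sign with its removable singularity, I would use partial fractions,
\[
\frac{x}{s^2-x^2}=\frac12\Big(\frac1{s-x}-\frac1{s+x}\Big).
\]
I would then differentiate the convergent combination with respect to the parameter in $\log(1+x)$ and in the kernel. An alternative I expect to be cleaner is to write $\log(1+s)=\int_0^1 s\,da/(1+as)$, so that the inner integral over $s$ is elementary in rational functions. A second route is to differentiate first: $\partial_x$ of the integrand yields integrals of the form $\int_0^\infty ds/((1+s)(s\pm x))=\log(1/ x)/(1\mp x)$ up to signs, plus a principal-value-type piece that cancels by symmetry. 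Combining $\tfrac12[\log(1/x)/(1-x)+\log(1/x)/(1+x)]$ gives exactly $\log(1/x)/(1-x^2)$.

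\textbf{Main obstacle and fallback.} The main obstacle is justifying the interchange at $s=x$ and at $s\to\infty$. The integrand decays like $\log s/s^2$, and the difference quotient is smooth in $s$ near $x$, so dominated convergence on compact $x$-intervals in $(0,\infty)$ should suffice. If that becomes messy, I would instead verify the claimed $\Phi$ by checking that it agrees with the outer-function phase, namely the Hilbert transform of $u_R$. Concretely, I would show that
\[
\exp\Big\{-\tfrac{\beta\omega_c}{2}\log(1+x)+i\tfrac{\beta\omega_c}{\pi}\Phi(x)\Big\}
\]
is the boundary value of an analytic function. Since $\log(1+x)-\tfrac{2i}{\pi}\Phi(x)$ should be the boundary value of an explicit function built from $\log(1-iz/\omega_c)$-type terms and dilogarithms, matching this closed form would confirm the result. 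The condition $\Phi(0)=0$ comes from the normalization that the phase tends to zero at $0^+$, which is the factor $x$ in front of a bounded integral.

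\textbf{Remaining claims.} At $x\to1$, set $x=1-\epsilon$. Then $\log(1/x)\approx\epsilon$ and $1-x^2\approx2\epsilon$, so $\Phi'\to1/2$ and $\tau_H\to\beta/(2\pi)$. For positivity, $\log(1/x)$ and $1-x^2$ have the same sign for every $x>0$, both positive when $x<1$ and both negative when $x>1$. Hence $\tau_H>0$ for $\omega>0$.
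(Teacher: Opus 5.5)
Your proposal is correct, and its ``differentiate first'' route is essentially the paper's proof. Moving the $x$-derivative onto $\log(1+s)$ (the paper's integration by parts) gives $\Phi'(x)=\PV\int_0^\infty s\,ds/[(1+s)(s^2-x^2)]$, whose partial fractions produce exactly your two elementary integrals, and $\Phi$ is then recovered by integrating from $0^+$.

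One justification needs repair. The integral multiplying $x$ in $\Phi$ is not bounded as $x\to0^+$; it grows like $\log(1/x)$, since $\Phi(x)\approx x\log(1/x)+x$. So $\Phi(0^+)=0$ follows from $x\log(1/x)\to0$, not from boundedness.
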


\begin{proof}
Differentiating the Hilbert-transform boundary value and integrating by parts gives
\begin{equation}
\phi_c'(\omega)=\frac{\beta\omega_c}{\pi}\PV
\int_0^\infty\frac{\nu\,d\nu}{(\nu+\omega_c)(\nu^2-\omega^2)}.
\label{eq:UV_phase_derivative_integral}
\end{equation}
This follows first on finite intervals with a symmetric exclusion at $\nu=\omega$; the differentiated data decay as $1/|\nu|$, so the resulting tail is integrable.
For $\omega\neq\omega_c$, partial fractions give
\begin{equation}
\frac{\nu}{(\nu+\omega_c)(\nu^2-\omega^2)}
=-\frac{\omega_c}{\omega_c^2-\omega^2}\frac{1}{\nu+\omega_c}
+\frac{1}{2(\omega_c+\omega)}\frac{1}{\nu-\omega}
+\frac{1}{2(\omega_c-\omega)}\frac{1}{\nu+\omega}.
\label{eq:UV_phase_partial_fractions}
\end{equation}
The logarithms at infinity cancel, leaving
$\omega_c\log(\omega_c/\omega)/(\omega_c^2-\omega^2)$ for the integral.
This proves Eq.~(\ref{eq:UV_group_delay}).
Integration from zero, using $\phi_c(0^+)=0$, gives Eq.~(\ref{eq:UV_phase_integral}).
\end{proof}

For $0<\omega/\omega_c<1$, expansion of the last integrand yields the convergent series
\begin{equation}
\phi_c(\omega)=\frac{\beta\omega}{\pi}
\sum_{j=0}^{\infty}\left(\frac{\omega}{\omega_c}\right)^{2j}
\left[\frac{\log(\omega_c/\omega)}{2j+1}+\frac{1}{(2j+1)^2}\right].
\label{eq:UV_phase_series}
\end{equation}

For the exact-band modulus in Eq.~(\ref{eq:exact_band_modulus}), let $\phi_b$ denote the outer phase normalized by $\phi_b(0^+)=0$.

\begin{prop}[Outer phase for exact finite-band matching]
\label{prop:band_phase}
For $0<\omega<\omega_b$, put $s=\omega/\omega_b$.
The outer phase satisfies
\begin{equation}
\frac{d\phi_b}{d\omega}
=\frac{\beta}{\pi}\left[
\log\frac{\omega_b}{\omega}
+\frac12\log(1-s^2)
+\frac{1}{2s}\log\frac{1+s}{1-s}\right].
\label{eq:band_group_delay}
\end{equation}
Its convergent expansion is
\begin{equation}
\phi_b(\omega)=\frac{\beta\omega}{\pi}
\left[\log\frac{\omega_b}{\omega}+2
-\sum_{j=1}^{\infty}\frac{s^{2j}}{2j(2j+1)^2}\right].
\label{eq:band_phase_series}
\end{equation}
In particular, as $\omega/\omega_b\to0$,
\begin{equation}
\phi_b(\omega)
=\frac{\beta\omega}{\pi}\left[\log\frac{\omega_b}{\omega}+2\right]
-\frac{\beta\omega^3}{18\pi\omega_b^2}
+O\left(\frac{\beta\omega^5}{\omega_b^4}\right).
\label{eq:band_phase_expansion}
\end{equation}
\end{prop}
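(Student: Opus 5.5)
The plan mirrors the proof of Proposition~\ref{prop:UV_phase_delay}. We differentiate the quadrature formula (\ref{eq:outer_phase_quadrature}) with the even data $u_R=\frac12\log W_b$, integrate by parts on finite intervals with a symmetric exclusion at $\nu=\omega$, and use the fact that $u_R'$ decays like $1/\nu$. This gives
\begin{equation*}
\frac{d\phi_b}{d\omega}=\frac{2}{\pi}\PV\int_0^\infty\frac{\nu\,u_R'(\nu)}{\nu^2-\omega^2}\,d\nu,
\end{equation*}
which is the standard derivative of the conjugate function for even data. It reproduces Eq.~(\ref{eq:UV_phase_derivative_integral}) when $u_R'=-\beta\omega_c/[2(\nu+\omega_c)]$. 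For the band modulus, $u_R'(\nu)=-\beta/2$ on $(0,\omega_b)$ and $u_R'(\nu)=-\beta\omega_b/(2\nu)$ for $\nu>\omega_b$. The derivative therefore splits into two elementary pieces,
\begin{equation*}
\frac{d\phi_b}{d\omega}=-\frac{\beta}{\pi}\left[\PV\int_0^{\omega_b}\frac{\nu\,d\nu}{\nu^2-\omega^2}+\omega_b\int_{\omega_b}^\infty\frac{d\nu}{\nu^2-\omega^2}\right].
\end{equation*}
The principal value appears only in the first piece, since $\omega<\omega_b$.

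Next we evaluate the two pieces. The first equals $\frac12\log[(\omega_b^2-\omega^2)/\omega^2]$. The second equals $\frac{\omega_b}{2\omega}\log\frac{\omega_b+\omega}{\omega_b-\omega}$. Substituting these, the overall minus sign has to be tracked carefully: the convention in Eq.~(\ref{eq:outer_phase_quadrature}) gives a positive delay, as in Proposition~\ref{prop:UV_phase_delay}. With $s=\omega/\omega_b$ the two pieces combine into $\log(\omega_b/\omega)+\frac12\log(1-s^2)+\frac1{2s}\log\frac{1+s}{1-s}$, which is Eq.~(\ref{eq:band_group_delay}). I expect this sign bookkeeping to be the main place to make a mistake. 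I will fix it by requiring consistency with Eq.~(\ref{eq:UV_group_delay}), which the same formula must reproduce. A second check is the limit $\omega_b\to\infty$ with local data, where the leading term must again be $(\beta/\pi)\log(1/\omega)$.

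For the series, expand for $|s|<1$. We have $\frac12\log(1-s^2)=-\sum_{j\ge1}s^{2j}/(2j)$ and $\frac1{2s}\log\frac{1+s}{1-s}=\sum_{j\ge0}s^{2j}/(2j+1)$. Hence
\begin{equation*}
\frac{d\phi_b}{d\omega}=\frac{\beta}{\pi}\Bigl[\log\frac{\omega_b}{\omega}+1-\sum_{j\ge1}\frac{s^{2j}}{2j(2j+1)}\Bigr].
\end{equation*}
We integrate from $0$ using $\phi_b(0^+)=0$; this normalization is legitimate because the integrand is $O(\log\omega)$, which is integrable at zero. The integrals are $\int_0^\omega\log(\omega_b/\nu)\,d\nu=\omega[\log(\omega_b/\omega)+1]$ and $\int_0^\omega s^{2j}\,d\nu=\omega s^{2j}/(2j+1)$. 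Together they give Eq.~(\ref{eq:band_phase_series}). Termwise integration is justified because the series converges uniformly on $[0,s_0]$ for every $s_0<1$. Taking only the $j=1$ term gives $-s^2/18$, so the remainder is $O(s^4)$ times $\beta\omega/\pi$. This yields Eq.~(\ref{eq:band_phase_expansion}).
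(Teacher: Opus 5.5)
Your route is the paper's route. You differentiate the subtracted Hilbert transform, split the data at $\omega_b$, evaluate two elementary integrals, and then expand and integrate termwise from $\phi_b(0^+)=0$. Everything from the evaluation of the two integrals onward is correct, including the coefficient $-1/18$.

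The gap is that the general derivative formula you start from has the wrong sign. For even data, the boundary phase of Eq.~(\ref{eq:outer_function}) is the conjugate function $\phi(\omega)=\pi^{-1}\PV\int_{-\infty}^{\infty}u_R(\nu)(\omega-\nu)^{-1}\,d\nu$. Its derivative is the conjugate function of $u_R'$, and folding with the oddness of $u_R'$ gives
\[
\frac{d\phi}{d\omega}=\frac{2}{\pi}\PV\int_0^\infty\frac{\nu\,u_R'(\nu)}{\omega^2-\nu^2}\,d\nu ,
\]
with the same denominator $\omega^2-\nu^2$ as in Eq.~(\ref{eq:outer_phase_quadrature}). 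Your version, with $\nu^2-\omega^2$, does not reproduce Eq.~(\ref{eq:UV_phase_derivative_integral}) as you assert. Inserting $u_R'=-\beta\omega_c/[2(\nu+\omega_c)]$ gives its negative. For the same reason, your displayed $d\phi_b/d\omega=-(\beta/\pi)[\cdots]$ is the negative of Eq.~(\ref{eq:band_group_delay}), since you then correctly evaluate the bracket as positive.

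Deciding the sign ``by consistency with Eq.~(\ref{eq:UV_group_delay})'' does not repair this as written, because you have already claimed that your formula agrees with the smooth case. The $\omega_b\to\infty$ check likewise presupposes the sign. With the corrected formula, the derivative is directly $(\beta/\pi)[\PV\int_0^{\omega_b}\nu\,d\nu/(\nu^2-\omega^2)+\omega_b\int_{\omega_b}^\infty d\nu/(\nu^2-\omega^2)]$, which is Eq.~(\ref{eq:band_phase_derivative_integral}).

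An independent sign check is also available. Since $u_R$ is strictly decreasing on $(0,\infty)$, the integrand of Eq.~(\ref{eq:outer_phase_quadrature}) is positive for every $\nu\neq\omega$, so $\phi_b>0$. Together with $\phi_b(0^+)=0$, this rules out a derivative that is negative throughout $(0,\omega_b)$.
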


\begin{proof}
For positive $\nu$, the logarithmic boundary datum has derivative $u_R'(\nu)=-\beta/2$ below $\omega_b$ and $u_R'(\nu)=-\beta\omega_b/(2\nu)$ above it.
Differentiating the subtracted Hilbert transform as in Eq.~(\ref{eq:UV_phase_derivative_integral}) gives
\begin{equation}
\phi_b'(\omega)=\frac{\beta}{\pi}\left[
\PV\int_0^{\omega_b}\frac{\nu\,d\nu}{\nu^2-\omega^2}
+\omega_b\int_{\omega_b}^{\infty}\frac{d\nu}{\nu^2-\omega^2}\right].
\label{eq:band_phase_derivative_integral}
\end{equation}
The two integrals equal $\log(\omega_b/\omega)+\frac12\log(1-s^2)$ and $(2s)^{-1}\log[(1+s)/(1-s)]$, respectively.
This proves Eq.~(\ref{eq:band_group_delay}).
Expanding the logarithms gives
$\phi_b'=(\beta/\pi)[\log(\omega_b/\omega)+1-\sum_{j\geq1}s^{2j}/(2j(2j+1))]$.
Integration from zero, with termwise integration on compact subintervals of $0\leq s<1$, proves Eqs.~(\ref{eq:band_phase_series}) and (\ref{eq:band_phase_expansion}).
\end{proof}

\subsection{A class of completions with a common low-frequency slope}
\label{sec:general_outer_phase}

The common logarithmic coefficient does not require either of the two specified tails.
To state sufficient conditions, write the flux reflectivity as
\begin{equation}
W_h(\omega)=\exp\left[-\beta\omega_c h\left(\frac{|\omega|}{\omega_c}\right)\right],
\qquad \beta>0,\quad \omega_c>0,
\label{eq:profile_modulus}
\end{equation}
where the dimensionless profile $h$ is held fixed when the cutoff is varied.
The normalization $h'(0)=1$ fixes the low-frequency Boltzmann slope.

\begin{prop}[Outer phase for a class of ultraviolet completions]
\label{prop:general_outer_phase}
Let $h:[0,\infty)\to[0,\infty)$ be measurable, with $h\in C^3([0,1])$, $h(0)=0$, $h'(0)=1$, and
\begin{equation}
\int_1^\infty\frac{h(t)}{t^2}\,dt<\infty.
\label{eq:profile_tail_condition}
\end{equation}
Then Eq.~(\ref{eq:profile_modulus}) admits a bounded outer reflection amplitude.
For its phase $\phi_h$, normalized by $\phi_h(0^+)=0$, put $s=\omega/\omega_c$.
As $s\to0^+$,
\begin{equation}
\phi_h(\omega)=\frac{\beta\omega}{\pi}
\left[\log\frac{\omega_c}{\omega}+C_h
+O_h\left(s^2[1+\log(1/s)]\right)\right],
\label{eq:profile_phase_expansion}
\end{equation}
where
\begin{equation}
C_h=\int_0^1\frac{h(t)-t}{t^2}\,dt
+\int_1^\infty\frac{h(t)}{t^2}\,dt.
\label{eq:profile_phase_constant}
\end{equation}
The remainder constant depends on the fixed profile $h$.
If $h(t)=t$ on $[0,1]$, the remainder in brackets improves to $O_h(s^2)$.
\end{prop}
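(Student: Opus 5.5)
The plan is to reduce everything to the rescaled kernel integral defined by the outer-phase quadrature in Eq.~(\ref{eq:outer_phase_quadrature}), and then split that integral at $t=1$.

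\emph{Existence.} Put $u_R(\nu)=-\tfrac12\beta\omega_c h(|\nu|/\omega_c)$. This is even and nonpositive, since $h\geq0$. The weighted integrability required by Eq.~(\ref{eq:outer_function}) follows from two facts. On $|\nu|\leq\omega_c$, $h$ is continuous on $[0,1]$ and hence bounded. On $|\nu|>\omega_c$, the weight $(\mu^2+\nu^2)^{-1}$ is comparable to $\nu^{-2}$, so the tail condition (\ref{eq:profile_tail_condition}) gives convergence. Hence ${\cal O}[u_R]$ exists. Its modulus is $e^{\RePart(\cdots)}$, where the real part is the Poisson integral of $u_R\leq0$, so $|{\cal O}[u_R]|\leq1$. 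This gives a bounded outer reflection with $|{\cal O}[u_R]|^2=W_h$ a.e.

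\emph{Rescaling.} Substitute $\nu=\omega_c t$ and $\omega=\omega_c s$ in Eq.~(\ref{eq:outer_phase_quadrature}). The prefactors combine to
\[
\phi_h(\omega)=\frac{\beta\omega}{\pi}\,I(s),\qquad
I(s)=\int_0^\infty\frac{h(t)-h(s)}{t^2-s^2}\,dt .
\]
For $s<1$, write $h(t)=t+g(t)$ on $[0,1]$, so that $g\in C^3([0,1])$ with $g(0)=g'(0)=0$. Split $I=I_1+I_2+I_3$.

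The first piece is exact:
\[
I_1=\int_0^1\frac{t-s}{t^2-s^2}\,dt=\log\frac{1+s}{s}=\log\frac1s+s+O(s^2).
\]

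The tail piece is
\[
I_3=\int_1^\infty\frac{h(t)-h(s)}{t^2-s^2}\,dt .
\]
For $t\geq1$ and $s\leq\tfrac12$, use $(t^2-s^2)^{-1}=t^{-2}+O(s^2t^{-4})$, together with $h(s)=s+O(s^2)$ and $\int_1^\infty(t^2-s^2)^{-1}dt=1+O(s^2)$. This gives
\[
I_3=\int_1^\infty\frac{h(t)}{t^2}\,dt-s+O(s^2).
\]
The $O(s^2)$ bound on $\int_1^\infty h\,t^{-4}$ follows from (\ref{eq:profile_tail_condition}). The linear terms $\pm s$ in $I_1$ and $I_3$ cancel; this is why no $O(s)$ term appears in (\ref{eq:profile_phase_expansion}).

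\emph{Main obstacle.} The near-diagonal piece
\[
I_2=\int_0^1\frac{g(t)-g(s)}{t^2-s^2}\,dt
\]
must be compared with $\int_0^1 g(t)t^{-2}dt$ to accuracy $O(s^2\log(1/s))$. I would use Taylor's formula with integral remainder: $g(t)=at^2+t^3k(t)$, where $k$ is continuous and bounded on $[0,1]$ by the $C^3$ hypothesis.

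The quadratic part contributes exactly $a=\int_0^1 at^2\,t^{-2}dt$, with no error.

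For the cubic part, the model case $k\equiv\mathrm{const}$ shows the mechanism:
\[
\frac{t^3-s^3}{t^2-s^2}-t=\frac{s^2}{t+s},
\]
whose integral over $[0,1]$ is $s^2\log[(1+s)/s]$. This is the source of the $s^2\log(1/s)$ remainder.

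For general $k$, write
\[
t^3k(t)-s^3k(s)=(t^3-s^3)k(t)+s^3[k(t)-k(s)].
\]
The first term is handled as in the model case, with $|k|$ bounded. For the second, I would use $|k(t)-k(s)|\lesssim|t-s|$. This Lipschitz bound follows from the integral representation of $k$, so $C^3$ is sufficient. The resulting contribution is
\[
s^3\int_0^1\frac{|t-s|}{t^2-s^2}\,dt=s^3\int_0^1\frac{dt}{t+s}=O\bigl(s^3\log(1/s)\bigr).
\]
Summing the pieces and identifying $\int_0^1 g\,t^{-2}=\int_0^1(h-t)t^{-2}$ gives $I(s)=\log(1/s)+C_h+O_h(s^2[1+\log(1/s)])$, with $C_h$ as in (\ref{eq:profile_phase_constant}).

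\emph{Special case and normalization.} If $h(t)=t$ on $[0,1]$, then $g\equiv0$, so $I_2=0$ and only the $O(s^2)$ errors from $I_1$ and $I_3$ remain. This gives the improved remainder. Finally, $\phi_h(0^+)=0$ holds automatically, because $s\log(1/s)\to0$; this is consistent with the chosen normalization.
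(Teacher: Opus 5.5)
Your route is the paper's route. You rescale to a single integral, split $h$ by Taylor's formula on $[0,1]$, evaluate the elementary pieces, and expand the tail. Your $I(s)$ equals the paper's $J_h(s)=\PV\int_0^\infty h(t)(t^2-s^2)^{-1}dt$, because $\PV\int_0^\infty (t^2-s^2)^{-1}dt=0$. The existence argument is correct, and so are $I_1$, $I_3$, the cancellation of the $\pm s$ terms, the quadratic term, and the special case $h(t)=t$.

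\textbf{The gap.} The bound on $s^3\int_0^1[k(t)-k(s)](t^2-s^2)^{-1}dt$ is not justified as written. The representation $k(t)=\tfrac12\int_0^1(1-v)^2g'''(tv)\,dv$ makes $k$ Lipschitz only if $g'''$ is Lipschitz, and $h\in C^3([0,1])$ does not give that. For example, $h(t)=t+t^{7/2}$ on $[0,1]$, extended by $h=2$ for $t>1$, satisfies every hypothesis. But then $g(t)=t^{7/2}$, $a=0$, and $k(t)=t^{1/2}$, which is not Lipschitz at $0$. So the claim that ``$C^3$ is sufficient'' for $|k(t)-k(s)|\lesssim|t-s|$ is false.

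\textbf{The fix.} The estimate itself is true. To repair the argument, peel off one fewer power of $t$, as the paper does. Put
\[
q(t)=tk(t)=\frac{g(t)-at^2}{t^2}=\int_0^1(1-v)\bigl[g''(tv)-g''(0)\bigr]\,dv .
\]
Then $q'(t)=\int_0^1 v(1-v)\,g'''(tv)\,dv$. Hence $q\in C^1([0,1])$ with $\|q'\|_\infty\le\|h'''\|_\infty/6$. Now regroup:
\[
\frac{t^2q(t)-s^2q(s)}{t^2-s^2}=q(t)+s^2\,\frac{q(t)-q(s)}{t^2-s^2}.
\]
The last term is bounded by $s^2\|q'\|_\infty\int_0^1(t+s)^{-1}dt=O\bigl(s^2[1+\log(1/s)]\bigr)$. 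This single bound replaces both your model-case computation and the Lipschitz step. Finally, $a+\int_0^1 q\,dt=\int_0^1 (h(t)-t)t^{-2}\,dt$ gives the first part of $C_h$.

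Alternatively, you can keep your split and use only $|k'(t)|\le\|g'''\|_\infty/t$ on $(0,1]$. This bound follows from $t^4k'(t)=tG'(t)-3G(t)$ with $G=g-at^2$. Estimate separately on $|t-s|\le s/2$ and on its complement; the $s^3$ term then comes out $O(s^2)$.
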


\begin{proof}
The stated assumptions make $\log W_h$ integrable with the weight used in Eq.~(\ref{eq:outer_function}).
Since $\log W_h\leq0$, its outer spectral factor is bounded by one.
For $0<s<1$, rescaling the Hilbert transform in Eq.~(\ref{eq:outer_phase_quadrature}) gives
\begin{equation}
\phi_h(\omega)=\frac{\beta\omega}{\pi}J_h(s),
\qquad
J_h(s)=\PV\int_0^\infty\frac{h(t)}{t^2-s^2}\,dt.
\label{eq:profile_phase_integral}
\end{equation}
For $0\leq t\leq1$, write $h(t)=t+a t^2+t^2q(t)$, where $a=h''(0)/2$.
The regularity assumption implies $q(0)=0$ and $q\in C^1([0,1])$.
The first two terms give
\begin{eqnarray}
\PV\int_0^1\frac{t\,dt}{t^2-s^2}
&=&\log(1/s)+\frac12\log(1-s^2),\nonumber\\
\PV\int_0^1\frac{t^2\,dt}{t^2-s^2}
&=&1+\frac{s}{2}\log\frac{1-s}{1+s}.
\label{eq:profile_elementary_integrals}
\end{eqnarray}
After subtracting $\int_0^1q(t)\,dt$, the remaining contribution from $t^2q(t)$ is $s^2\PV\int_0^1q(t)/(t^2-s^2)\,dt$.
Subtracting $q(s)$ in this last integral bounds its nonsingular part by
$\|q'\|_\infty\int_0^1dt/(t+s)=O_h(1+\log(1/s))$.
The constant part is $q(s)(2s)^{-1}\log[(1-s)/(1+s)]$ and is bounded as $s\to0^+$.
Finally, Eq.~(\ref{eq:profile_tail_condition}) gives
\begin{equation}
\int_1^\infty\frac{h(t)}{t^2-s^2}\,dt
=\int_1^\infty\frac{h(t)}{t^2}\,dt+O_h(s^2).
\label{eq:profile_tail_expansion}
\end{equation}
Combining these terms proves Eqs.~(\ref{eq:profile_phase_expansion}) and (\ref{eq:profile_phase_constant}).
If $h(t)=t$ on $[0,1]$, then $a=q=0$, which gives the sharper remainder.
\end{proof}

The smooth completion has $h(t)=\log(1+t)$ and $C_h=1$.
The exact-band completion has $h(t)=t$ for $t\leq1$ and $h(t)=1+\log t$ for $t>1$, with $\omega_c$ replaced by $\omega_b$ and $C_h=2$.
Thus Eqs.~(\ref{eq:UV_phase_series}) and (\ref{eq:band_phase_series}) refine the general result for these profiles.
Proposition~\ref{prop:general_outer_phase} concerns outer reflection amplitudes with a fixed profile; it does not constrain additional inner factors or arbitrary cutoff-dependent changes of shape.

\subsection{Reference-plane transport and a family of local responses}
\label{sec:reference_transport}

Two operations must be distinguished.
For one fixed scatterer, changing the reference plane from $x_1$ to $x_2$ within a free region gives
\begin{equation}
{\cal R}_H(\omega;x_2)
=e^{2i\omega(x_2-x_1)}{\cal R}_H(\omega;x_1).
\label{eq:reference_plane_transport}
\end{equation}
Thus ${\cal R}_H(\omega;x)e^{-2i\omega x}$ is exactly unchanged.
If a nonzero potential is crossed, the same boundary data must instead be transported by the wave equation; where defined, their admittance obeys
\begin{equation}
\partial_x\Lambda(\omega;x)
=V(x)-\omega^2-\Lambda(\omega;x)^2.
\label{eq:admittance_transport}
\end{equation}
This is the matching operation underlying boundary effective theories \cite{Burgess2018}.

Our second operation assigns a specified local response law at each $r_0$, holds its proper-frame cutoff fixed, and applies Eqs.~(\ref{eq:Tolman_beta}), (\ref{eq:redshifted_cutoff}), or (\ref{eq:redshifted_band_edge}).
For the smooth family even the modulus then changes with $r_0$.
The family is not asserted to solve Eq.~(\ref{eq:admittance_transport}); it need not represent exact descriptions of one physical scatterer.
The following result identifies the leading matching-distance dependence that cancels within this family and the residual dependence that remains.

\subsection{Leading cancellation and completion-dependent remainders}

For Eq.~(\ref{eq:profile_modulus}), the local law $W_{h,{\rm loc}}=\exp[-\beta_{\rm loc}\Omega_{\rm UV}h(|\omega_{\rm loc}|/\Omega_{\rm UV})]$ gives the same redshifted cutoff as Eq.~(\ref{eq:redshifted_cutoff}).
The phase expansion in Proposition~\ref{prop:general_outer_phase} therefore applies to the matching-scale comparison with a fixed local profile.

\begin{prop}[Near-horizon matching-scale cancellation]
\label{prop:matching_scale_cancellation}
For a fixed profile $h$ satisfying Proposition~\ref{prop:general_outer_phase}, take its outer reflection with no additional inner factor.
Let $\beta=\beta_H=8\pi M$, let $\Omega_{\rm UV}$ be fixed in the local proper frame, and let $\omega_c(\zeta)$ be given by Eq.~(\ref{eq:redshifted_cutoff}).
Define the part of the round-trip phase assigned to near-horizon propagation and reflection by
\begin{equation}
\Phi_{\rm in}(\omega;\zeta)
:=\phi_h(\omega;\omega_c(\zeta))-2\omega x_0(\zeta).
\label{eq:inner_roundtrip_phase}
\end{equation}
In the overlap regime
\begin{equation}
\zeta\ll1,
\qquad
0<\frac{\omega}{\omega_c(\zeta)}\ll1,
\label{eq:matching_overlap_regime}
\end{equation}
the leading logarithmic dependence on the matching scale cancels and
\begin{equation}
\begin{split}
\Phi_{\rm in}(\omega;\zeta)
={}&\Phi_0(\omega)
-8M\omega\zeta^2+O(M\omega\zeta^4)\\
&+O_h\left[M\omega\left(\frac{\omega}{\omega_c}\right)^2
\left(1+\log\frac{\omega_c}{\omega}\right)\right],
\end{split}
\label{eq:matching_scale_cancellation}
\end{equation}
where
\begin{equation}
\Phi_0(\omega)=8M\omega\log\frac{\Omega_{\rm UV}}{\omega}
+(8C_h-4)M\omega
\label{eq:profile_leading_roundtrip}
\end{equation}
is independent of $\zeta$ in the tortoise convention of Eq.~(\ref{eq:x0_y_exact}).
If $h(t)=t$ on $[0,1]$, the dispersive remainder improves to $O_h[M\omega(\omega/\omega_c)^2]$.
Thus the term proportional to $M\omega\log\zeta$ cancels at leading order in $\omega/\omega_c$; logarithmic dependence may remain in the suppressed dispersive correction.
\end{prop}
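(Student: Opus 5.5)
We substitute the expansion of Proposition~\ref{prop:general_outer_phase} into Eq.~(\ref{eq:inner_roundtrip_phase}), with $\beta=8\pi M$ so that $\beta/\pi=8M$. This gives
\[
\phi_h(\omega;\omega_c)=8M\omega\left[\log\frac{\omega_c}{\omega}+C_h\right]+O_h\!\left[M\omega s^2\bigl(1+\log(1/s)\bigr)\right],
\qquad s=\frac{\omega}{\omega_c(\zeta)} .
\]
The remainder is already in the form of the last term of Eq.~(\ref{eq:matching_scale_cancellation}). For $h(t)=t$ on $[0,1]$, the sharper remainder of Proposition~\ref{prop:general_outer_phase} transfers directly to the improved claim. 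The regime of Eq.~(\ref{eq:matching_overlap_regime}) is exactly the regime $s\to0^+$ in which Proposition~\ref{prop:general_outer_phase} applies. The implicit constant depends only on $h$, because the profile is held fixed as $\zeta$ varies.

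Next we split $\log(\omega_c/\omega)=\log(\Omega_{\rm UV}/\omega)+\log\sqrt{f_0}$ using Eq.~(\ref{eq:redshifted_cutoff}). Then we insert the series of Eq.~(\ref{eq:redshift_proper_series}):
\[
\log\sqrt{f_0}=\log\zeta+\log\!\left(1-\tfrac23\zeta^2+O(\zeta^4)\right)=\log\zeta-\tfrac23\zeta^2+O(\zeta^4).
\]
For the propagation term we use Eq.~(\ref{eq:tortoise_proper_series}):
\[
-2\omega x_0=-4M\omega-8M\omega\log\zeta-\tfrac83 M\omega\zeta^2+O(M\omega\zeta^4).
\]

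Adding these pieces, the $8M\omega\log\zeta$ contributions cancel; this is the cancellation asserted in the proposition. The constants collect to $8C_hM\omega-4M\omega$, which reproduces $\Phi_0$ of Eq.~(\ref{eq:profile_leading_roundtrip}). The $\zeta^2$ terms combine as $-\tfrac{16}{3}M\omega\zeta^2-\tfrac83M\omega\zeta^2=-8M\omega\zeta^2$. The $\zeta^4$ leftovers give $O(M\omega\zeta^4)$. The independence of $\Phi_0$ from $\zeta$ is then immediate, given the additive constant fixed in Eq.~(\ref{eq:x0_y_exact}).

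The expected main obstacle is to justify the two near-horizon series, Eqs.~(\ref{eq:redshift_proper_series}) and (\ref{eq:tortoise_proper_series}), to the needed order, and to show that the errors are uniform in $\omega$. For the series, I would invert Eq.~(\ref{eq:proper_distance_exact}) as an odd analytic series $y=\zeta-\tfrac{1}{3}\zeta^3+\dots$, starting from $\zeta=y+y^3/6+O(y^5)$, and substitute it into Eqs.~(\ref{eq:redshift_y}) and (\ref{eq:x0_y_exact}). Only the $\zeta^2$ coefficients are actually required. For uniformity, the geometric errors enter multiplied by $M\omega$ with $\zeta$-only constants, so they are uniform. The dispersive error's dependence on $\zeta$ enters only through $\omega_c(\zeta)$, so it is uniform as well. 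Finally, a residual $\log\zeta$ can still appear inside the dispersive term through $\log(\omega_c/\omega)$. This is the caveat stated at the end of the proposition.
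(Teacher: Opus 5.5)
Your proposal is correct and follows the paper's proof essentially line by line. You insert Proposition~\ref{prop:general_outer_phase} with $\beta/\pi=8M$, expand $\log\omega_c=\log\Omega_{\rm UV}+\log\zeta-\tfrac23\zeta^2+O(\zeta^4)$ and $-2\omega x_0$ from the two near-horizon series, cancel the $\log\zeta$ terms, and combine $-\tfrac{16}{3}$ and $-\tfrac83$ into $-8$, with the dispersive remainder inherited unchanged.

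There is one slip in your side justification. Inverting $\zeta=y+y^3/6+O(y^5)$ gives $y=\zeta-\zeta^3/6+O(\zeta^5)$, not $\zeta-\zeta^3/3$, and only the former reproduces the coefficients $-\tfrac23$ and $\tfrac43$ you cite. The total $-8M\omega\zeta^2$ is unaffected, because the geometric part of $\Phi_{\rm in}$ is exactly $-4M\omega[1+y^2+\log(1+y^2)]$.
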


\begin{proof}
Equations~(\ref{eq:redshift_proper_series}) and (\ref{eq:redshifted_cutoff}) give
\begin{equation}
\log\omega_c
=\log\Omega_{\rm UV}+\log\zeta-\frac{2}{3}\zeta^2+O(\zeta^4).
\label{eq:log_cutoff_proper_series}
\end{equation}
Proposition~\ref{prop:general_outer_phase} therefore gives
\begin{equation}
\begin{split}
\phi_h(\omega)
={}&8M\omega\left[\log\frac{\Omega_{\rm UV}}{\omega}
+\log\zeta+C_h-\frac{2}{3}\zeta^2+O(\zeta^4)\right]\\
&+O_h\left[M\omega\left(\frac{\omega}{\omega_c}\right)^2
\left(1+\log\frac{\omega_c}{\omega}\right)\right].
\end{split}
\label{eq:phase_matching_expansion}
\end{equation}
On the other hand, Eq.~(\ref{eq:tortoise_proper_series}) gives
\begin{equation}
-2\omega x_0
=-8M\omega\log\zeta-\frac{8}{3}M\omega\zeta^2
-4M\omega+O(M\omega\zeta^4).
\label{eq:propagation_matching_expansion}
\end{equation}
The $\log\zeta$ terms cancel exactly and the two quadratic terms sum to $-8M\omega\zeta^2$.
The remaining linear term gives Eq.~(\ref{eq:profile_leading_roundtrip}), and the dispersive bound follows from Proposition~\ref{prop:general_outer_phase}.
\end{proof}

For two matching surfaces $\zeta_1$ and $\zeta_2$ satisfying Eq.~(\ref{eq:matching_overlap_regime}), Proposition~\ref{prop:matching_scale_cancellation} gives
\begin{equation}
\begin{split}
\Phi_{\rm in}(\omega;\zeta_2)-\Phi_{\rm in}(\omega;\zeta_1)
={}&-8M\omega(\zeta_2^2-\zeta_1^2)
+O[M\omega(\zeta_1^4+\zeta_2^4)]\\
&+O_h\left[M\omega\sum_{i=1}^{2}
\left(\frac{\omega}{\omega_c(\zeta_i)}\right)^2
\left(1+\log\frac{\omega_c(\zeta_i)}{\omega}\right)\right].
\end{split}
\label{eq:matching_scale_difference}
\end{equation}
The cancellation concerns the leading phase within the specified family.
It does not prove that the complete response or the emitted spectrum is independent of the matching surface.
In particular, the dispersive term in Eq.~(\ref{eq:matching_scale_cancellation}) and the change in the smooth modulus remain.
The geometric and dispersive expansions require two simultaneous smallness conditions.
At fixed $\omega$ and $\Omega_{\rm UV}$, these give the overlap interval $\omega/\Omega_{\rm UV}\ll\zeta\ll1$ to leading order.
The two remainders are kept separately, and neither is assumed smaller than the other.

The overlap condition has a direct physical meaning:
\begin{equation}
\frac{\omega}{\omega_c}
=\frac{4M\omega}{d_0\Omega_{\rm UV}}\left[1+O(\zeta^2)\right]\ll1.
\label{eq:physical_cutoff_hierarchy}
\end{equation}
At the representative ringdown-band value $M\omega=0.35$, it requires $d_0\Omega_{\rm UV}\gg1.4$.
If the proper distance is of order the Planck length $\ell_{\rm Pl}$ and the local cutoff is of order $\ell_{\rm Pl}^{-1}$, their product is of order unity.
The ringdown band is then not parametrically within this low-energy regime.
The expansion still applies at lower frequencies satisfying $4M\omega\ll d_0\Omega_{\rm UV}$.
The same condition holds for the exact-band scale after replacing $\Omega_{\rm UV}$ by $\Omega_b$.

For the smooth completion, $C_h=1$ gives
\begin{equation}
\Phi_{0,c}(\omega)
=8M\omega\log\frac{\Omega_{\rm UV}}{\omega}+4M\omega.
\label{eq:leading_smooth_roundtrip}
\end{equation}
For this completion, $\Phi_0=\Phi_{0,c}$ and $\Phi_{{\rm in},c}=\Phi_{\rm in}$.
The corresponding calculation for Eq.~(\ref{eq:band_phase_expansion}), with $\beta=8\pi M$ and $\Omega_b$ fixed, gives
\begin{eqnarray}
\Phi_{{\rm in},b}(\omega;\zeta)
&:=&\phi_b(\omega;\omega_b(\zeta))-2\omega x_0(\zeta),\nonumber\\
&=&\Phi_{0,b}(\omega)-8M\omega\zeta^2
-\frac{4M\omega^3}{9\omega_b(\zeta)^2}\nonumber\\
&&+O(M\omega\zeta^4)
+O\left[ M\omega\left(\frac{\omega}{\omega_b}\right)^4\right],
\label{eq:band_matching_cancellation}\\
\Phi_{0,b}(\omega)
&=&8M\omega\log\frac{\Omega_b}{\omega}+12M\omega.
\label{eq:leading_band_roundtrip}
\end{eqnarray}
This holds for $\zeta\ll1$ and $0<\omega/\omega_b(\zeta)\ll1$.
Thus exact Boltzmann matching in a finite band is compatible with the same leading cancellation.
The different linear constant and dispersive terms record the choice of completion.
Adding an inner factor $I(z;\zeta)$ would add $\arg I(\omega;\zeta)$ to these phases.
A matching-dependent inner delay can change their scale dependence and is not constrained by the modulus alone.

For a general fixed inverse Killing temperature $\beta$, the leading logarithmic dependence on the matching distance for any fixed profile in Proposition~\ref{prop:general_outer_phase} is
\begin{equation}
\left(\frac{\beta}{\pi}-8M\right)\omega\log\zeta.
\label{eq:temperature_mismatch_log}
\end{equation}
Thus the cancellation in these families requires the Hawking value $\beta=8\pi M$, in addition to the Tolman redshift relation.

In the near-horizon composition of Sec.~\ref{sec:composition}, the full round-trip phase is $\arg{\cal R}_{\bh}+\Phi_{\rm in}$.
Using the common horizon-asymptotic barrier leaves the above leading cancellation unchanged.
Exact scattering at a finite plane also changes the assigned exterior-potential tail; this effect is distinct from the two response remainders.
Appendix~\ref{app:numerical} checks the finite-plane approximation on the frequency grid used for the fixed-plane spectra.

\begin{rem}[Event-horizon limit and domain of validity]
\label{rem:event_horizon_limit}
Equation~(\ref{eq:cutoff_proper_distance}) gives $\omega_c\to0$ as $d_0\to0$ for fixed finite $\Omega_{\rm UV}$.
Hence the limit $d_0\to0$ at fixed nonzero Killing frequency does not remain in Eq.~(\ref{eq:matching_overlap_regime}); equivalently, $\omega_{\rm loc}=\omega/\sqrt{f_0}$ diverges.
Proposition~\ref{prop:matching_scale_cancellation} is therefore an overlap statement for a stretched-horizon effective response, not a uniform extrapolation of a finite-local-cutoff model to the mathematical event horizon.
The loss of the low-energy expansion in that strict limit is the expected gravitational blueshift, not an inconsistency of the matching construction.
\end{rem}

\subsection{Direct comparison at fixed local cutoff}

Figure~\ref{fig:matching_scale} varies the matching distance while holding the proper-frame cutoff fixed.
Its upper panel separates the propagation and smooth outer-phase changes relative to a reference surface.
Their sum has no leading logarithmic change.
The lower panel resolves the remaining phase after subtracting the scale-independent leading term, for both completions.
The geometric and dispersive corrections can dominate in different parts of the same overlap interval.
This plot evaluates the response-family comparison directly; it does not add or fit a delay.

\begin{figure}[htbp]
\centering
\pgfplotstableread[col sep=space]{
zeta propagation reflection total smooth band geometry
0.002 0 0 0 1.240661050681 -0.05884174322499 -0.0112
0.002058401054389 -0.0805906994683 0.08052776449914 -6.293496916517e-05 1.177726081516 -0.05684021842889 -0.01186364172199
0.002118507450355 -0.1611814120443 0.1610585527885 -0.0001228592558535 1.117801794828 -0.05502711847531 -0.01256660668818
0.00218036898477 -0.2417721385047 0.2415922168956 -0.0001799216090959 1.060739441586 -0.05339643109013 -0.01331122494729
0.002244036908604 -0.3223628796721 0.3221286155981 -0.0002342640740451 1.006396976636 -0.05194274889555 -0.01409996461209
0.002309563969379 -0.4029536364181 0.4026676141102 -0.000286022307932 0.9546387427495 -0.05066125145627 -0.01493544004023
0.002377004454874 -0.4835444096657 0.4832090837823 -0.0003353258834356 0.9053351672459 -0.04954768927434 -0.01582042049977
0.0024464142381 -0.5641352003926 0.5637529018127 -0.0003822985799981 0.8583624706834 -0.04859836967928 -0.01675783934826
0.002517850823588 -0.6447260096347 0.6442989509711 -0.0004270586636042 0.8136023870772 -0.04781014456639 -0.01775080375556
0.002591373395034 -0.7253168384889 0.7248471193334 -0.0004697191555293 0.7709418951522 -0.04718039994225 -0.01880260500297
0.002667042864327 -0.8059076881174 0.8053973000268 -0.0005103880905476 0.7302729601339 -0.04670704724259 -0.01991672939244
0.002744921922015 -0.8864985597511 0.885949390986 -0.0005491687650778 0.6914922856037 -0.04638851639344 -0.02109686980229
0.002825075089246 -0.9670894546938 0.9665032947181 -0.0005861599757329 0.6545010749486 -0.04622375059238 -0.02234693792765
0.002907568771215 -1.047680374327 1.047058918078 -0.000621456248725 0.6192048019565 -0.04621220279239 -0.02367107724617
0.002992471312189 -1.128271320113 1.127616172052 -0.0006551480605684 0.5855129901131 -0.0463538338765 -0.02507367675197
0.003079853052119 -1.208862293602 1.208174971552 -0.0006873220505096 0.5533390001719 -0.04664911251693 -0.02655938550341
0.003169786384922 -1.289453296435 1.28873523521 -0.0007180612251034 0.522599825578 -0.04709901671827 -0.02813312803291
0.003262345818456 -1.370044330352 1.369296885197 -0.0007474451553445 0.4932158953369 -0.04770503704951 -0.02980012066975
0.003357608036245 -1.450635397193 1.449859847027 -0.0007755501667515 0.4651108839299 -0.04846918157559 -0.03156588883016
0.003455651961016 -1.531226498911 1.530424049388 -0.0008024495227947 0.4382115278868 -0.04939398250472 -0.03343628533188
0.003556558820078 -1.61181763757 1.610989423968 -0.000828213602047 0.4124474486345 -0.05048250457332 -0.03541750979389
0.003660412212622 -1.692408815361 1.691555905291 -0.0008529100694305 0.387750981251 -0.05173835519632 -0.03751612918568
0.00376729817898 -1.773000034601 1.772123430559 -0.0008766040419233 0.3640570087582 -0.05316569641627 -0.03973909959416
0.003877305271904 -1.853591297747 1.852691939498 -0.000899358249084 0.3413028015975 -0.05476925869092 -0.04209378928031
0.003990524629938 -1.934182607401 1.933261374213 -0.0009212331887442 0.3194278619373 -0.05655435656468 -0.04458800310199
0.004107050052914 -2.014773966318 2.01383167904 -0.0009422872782146 0.2983737724669 -0.05852690627592 -0.047230008384
0.004226978079673 -2.095365377417 2.094402800416 -0.0009625770013431 0.2780840493384 -0.06069344535836 -0.05002856232091
0.004350408068039 -2.175956843791 2.174974686739 -0.00098215705176 0.2585039989215 -0.06306115430126 -0.05299294100369
0.004477442277137 -2.256548368713 2.25554728824 -0.00100108047264 0.239580578041 -0.06563788034034 -0.05613297016625
0.004608185952112 -2.337139955654 2.33612055686 -0.001019398793311 0.2212622573704 -0.0684321634579 -0.05945905775387
0.004742747411323 -2.417731608287 2.416694446124 -0.001037162163024 0.2034988876571 -0.07145326467839 -0.06298222842132
0.004881238136084 -2.498323330507 2.497268911025 -0.001054419482222 0.1862415684592 -0.07471119675287 -0.06671416007525
0.005023772863019 -2.578915126435 2.577843907904 -0.00107121853161 0.1694425190714 -0.07821675733414 -0.07066722258178
0.005170469679124 -2.65950700044 2.65841939434 -0.001087606099358 0.1530549513235 -0.08198156475212 -0.07485451876768
0.005321450119598 -2.740098957147 2.73899532904 -0.00110362810675 0.1370329439311 -0.08601809650796 -0.07928992785102
0.005476839268529 -2.820691001456 2.819571671724 -0.001119329732603 0.1213313180789 -0.09033973061432 -0.08398815144524
0.005636765862529 -2.901283138559 2.900148383022 -0.001134755536764 0.1059055139174 -0.09496078991846 -0.08896476228912
0.005801362397386 -2.981875373953 2.98072542437 -0.001149949583033 0.09071146764879 -0.09989658955489 -0.09423625586426
0.005970765237836 -3.062467713463 3.061302757902 -0.001164955561805 0.07570548887686 -0.1051634876847 -0.09982010507098
0.006145114730535 -3.143060163258 3.141880346345 -0.001179816912789 0.06084413789248 -0.1107789396889 -0.105734818144
0.006324555320337 -3.223652729871 3.222458152923 -0.001194576948121 0.04608410256025 -0.1167615559962 -0.112
0.006509235669961 -3.304245420226 3.30303614125 -0.001209278976214 0.03138207446776 -0.1231311637346 -0.1186364172199
0.006699308783157 -3.384838241653 3.383614275226 -0.001223966426689 0.01669462399217 -0.1299088724131 -0.1256660668818
0.006894932131463 -3.465431201919 3.464192518942 -0.00123868297675 0.00197807393188 -0.1371171438485 -0.1331122494729
0.007096267784672 -3.54602430925 3.544770836571 -0.001253472679339 -0.01281162865711 -0.1447798665695 -0.1409996461209
0.007303482545097 -3.62661757236 3.625349192267 -0.001268380093471 -0.02771904278977 -0.1529224349452 -0.1493544004023
0.007516748085769 -3.70721100048 3.705927550063 -0.001283450417107 -0.04278936642506 -0.1615718332966 -0.1582042049977
0.007736241092661 -3.787804603385 3.786505873762 -0.001298729622956 -0.0580685722744 -0.1707567252725 -0.1675783934826
0.00796214341107 -3.868398391433 3.867084126835 -0.00131426459763 -0.07360354694872 -0.180507548783 -0.1775080375556
0.008194642196271 -3.948992375593 3.947662272308 -0.001330103284542 -0.0894422338609 -0.1908566168044 -0.1880260500297
0.008433930068572 -4.029586567484 4.028240272653 -0.001346294830997 -0.1056337803152 -0.2018382243901 -0.1991672939244
0.008680205272895 -4.110180979416 4.108818089676 -0.001362889739912 -0.1222286892307 -0.2134887622384 -0.2109686980229
0.008933671843019 -4.190775624424 4.189395684397 -0.001379940026646 -0.139278975965 -0.2258468371949 -0.2234693792765
0.009194539770617 -4.271370516319 4.269973016938 -0.001397499381401 -0.1568383307194 -0.2389534000856 -0.2367107724617
0.00946302517923 -4.35196566973 4.350550046392 -0.001415623337714 -0.1749622870326 -0.2528518813025 -0.2507367675197
0.009739350503317 -4.432561100149 4.431126730701 -0.001434369447567 -0.193708396886 -0.2675883345913 -0.2655938550341
0.01002374467255 -4.51315682399 4.511703026527 -0.001453797463652 -0.2131364129706 -0.2832115895124 -0.2813312803291
0.01031644330145 -4.593752858638 4.592278889109 -0.001473969529372 -0.2333084786908 -0.2997734130806 -0.2980012066975
0.01061768888462 -4.674349222507 4.67285427213 -0.001494950377185 -0.2542893265037 -0.3173286811156 -0.3156588883016
0.01092773099764 -4.754945935104 4.753429127568 -0.001516807535908 -0.2761464852263 -0.3359355598677 -0.3343628533188
0.01124682650381 -4.835543017088 4.83400340554 -0.001539611547651 -0.2989504969692 -0.3556556985179 -0.3541750979389
0.01157523976698 -4.916140490346 4.914577054151 -0.001563436195073 -0.3227751443912 -0.376554433186 -0.3751612918568
0.01191324287058 -4.996738378059 4.99515001932 -0.001588358739683 -0.347697689002 -0.3987010031186 -0.3973909959416
0.012261115843 -5.077336704783 5.075722244611 -0.001614460171961 -0.3737991212798 -0.4221687797693 -0.4209378928031
0.0126191468896 -5.157935496526 5.156293671052 -0.001641825474089 -0.4011644234078 -0.4470355095232 -0.4458800310199
0.01298763263152 -5.238534780839 5.236864236943 -0.001670543896158 -0.4298828454765 -0.4733835708661 -0.47230008384
0.01336687835137 -5.319134586905 5.317433877659 -0.001700709246727 -0.4600481960458 -0.5013002468419 -0.5002856232091
0.01375719824618 -5.399734945635 5.398002525436 -0.001732420198686 -0.4917591480047 -0.5308780136942 -0.5299294100369
0.01415891568768 -5.48033588977 5.478570109158 -0.001765780611401 -0.5251195607197 -0.562214846641 -0.5613297016625
0.01457236349026 -5.560937453992 5.559136554122 -0.001800899870198 -0.5602388195161 -0.5954145437837 -0.5945905775387
0.01499788418665 -5.641539675036 5.639701781792 -0.001837893244275 -0.5972321935935 -0.6305870692157 -0.6298222842132
0.0154358303117 -5.722142591815 5.720265709551 -0.001876882264218 -0.6362212135367 -0.6678489164528 -0.6671416007525
0.01588656469449 -5.802746245544 5.800828250424 -0.001917995120329 -0.6773340696473 -0.7073234933785 -0.7066722258178
0.01635046075887 -5.883350679882 5.881389312799 -0.001961367083073 -0.7207060323912 -0.7491415299625 -0.7485451876768
0.0168279028329 -5.96395594107 5.961948800123 -0.002007140947007 -0.7664798963257 -0.793441510087 -0.7928992785102
0.0173192864672 -6.044562078092 6.042506610593 -0.002055467499631 -0.8148064489497 -0.8403701288928 -0.8398815144524
0.01782501876267 -6.12516914283 6.123062636814 -0.002106506016681 -0.8658449659994 -0.8900827771385 -0.8896476228912
0.0183455187078 -6.20577719024 6.203616765455 -0.002160424785478 -0.9197637347963 -0.9427440541527 -0.9423625586426
0.01888121752572 -6.286386278531 6.284168876873 -0.00221740165803 -0.9767406073485 -0.9985283110538 -0.9982010507098
0.01943255903154 -6.366996469362 6.364718844726 -0.002277624635679 -1.036963584997 -1.057620226004 -1.05734818144
0.02 -6.447607828038 6.445266535551 -0.002341292487189 -1.100631436507 -1.120215413372 -1.12
}\matchingdata
\begin{tikzpicture}
\begin{axis}[
name=upper,width=0.86\linewidth,height=5.0cm,
xmode=log,xmin=0.002,xmax=0.02,ymin=-7,ymax=7,
xtick={0.002,0.005,0.01,0.02},xticklabels={},
ylabel={Phase change (rad)},grid=major,
legend style={font=\small,at={(0.5,1.03)},anchor=south,legend columns=3,draw=none,fill=none},
tick label style={font=\small}]
\addplot[blue,thick] table[x=zeta,y=reflection] {\matchingdata};
\addlegendentry{$\Delta\phi_c$}
\addplot[red,dashed,thick] table[x=zeta,y=propagation] {\matchingdata};
\addlegendentry{$-2\omega\Delta x_0$}
\addplot[black,densely dotted,thick] table[x=zeta,y=total] {\matchingdata};
\addlegendentry{Sum}
\end{axis}
\begin{axis}[
at={(upper.south west)},anchor=north west,yshift=-0.85cm,
width=0.86\linewidth,height=5.0cm,
xmode=log,xmin=0.002,xmax=0.02,
xtick={0.002,0.005,0.01,0.02},xticklabels={0.002,0.005,0.01,0.02},
xlabel={$\zeta=d_0/(4M)$},ylabel={Residual phase (mrad)},
grid=major,tick label style={font=\small},
legend style={font=\small,at={(0.5,1.03)},anchor=south,legend columns=3,draw=none,fill=none}]
\addplot[blue,thick] table[x=zeta,y=smooth] {\matchingdata};
\addlegendentry{Smooth}
\addplot[red,dashed,thick] table[x=zeta,y=band] {\matchingdata};
\addlegendentry{Exact band}
\addplot[black,densely dotted,thick] table[x=zeta,y=geometry] {\matchingdata};
\addlegendentry{$-8M\omega\zeta^2$}
\end{axis}
\end{tikzpicture}
\caption{Matching-distance dependence at fixed proper-frame cutoff, with $M\omega=0.35$ and $M\Omega_{\rm UV}=M\Omega_b=10^4$.
Top: changes of the smooth outer phase and the propagation phase relative to $\zeta=0.002$, and their sum.
Bottom: $\Phi_{{\rm in},c}-\Phi_{0,c}$ and $\Phi_{{\rm in},b}-\Phi_{0,b}$, compared with the leading geometric term.
The band edges satisfy $\omega/\omega_{c,b}<0.018$ throughout.
This is a comparison within the specified response families, without a fitted delay or the exterior barrier phase.}
\label{fig:matching_scale}
\end{figure}

\subsection{Fixed-plane cutoff variation and delay alignment}

For fixed $\lambda>0$, as $\omega_c\to\infty$ at fixed positive $\omega$,
\begin{equation}
\phi_{\lambda c}(\omega)-\phi_c(\omega)
=\frac{\beta\omega}{\pi}\log\lambda
+O\left(\frac{\beta\omega^3}{\omega_c^2}
\log\frac{\omega_c}{\omega}\right).
\label{eq:UV_phase_scale_change}
\end{equation}
Here $\phi_{\lambda c}$ uses the cutoff $\lambda\omega_c$.
The modulus can therefore approach the Boltzmann law without convergence of the phase at a fixed plane.
To isolate the part not accounted for by the leading delay, define
\begin{eqnarray}
\Delta\tau_\lambda&=&\frac{\beta}{\pi}\log\lambda,
\label{eq:UV_leading_delay}\\
\epsilon_{\lambda,c}(\omega)&=&\phi_{\lambda c}(\omega)-\phi_c(\omega)
-\omega\Delta\tau_\lambda.
\label{eq:UV_residual_phase}
\end{eqnarray}
The $j=0$ terms in Eq.~(\ref{eq:UV_phase_series}) cancel exactly.
The $j=1$ terms give, at fixed $\lambda$ and positive $\omega$ as $\omega_c\to\infty$,
\begin{equation}
\begin{split}
\epsilon_{\lambda,c}(\omega)
={}&\frac{\beta\omega^3}{\pi\omega_c^2}
\Biggl\{(\lambda^{-2}-1)
\left[\frac{\log(\omega_c/\omega)}{3}+\frac19\right]
+\frac{\lambda^{-2}\log\lambda}{3}\Biggr\}\\
&+O\left(\frac{\beta\omega^5}{\omega_c^4}
\log\frac{\omega_c}{\omega}\right).
\end{split}
\label{eq:UV_residual_expansion}
\end{equation}
This residual is dispersive and suppressed by two powers of the cutoff, up to the logarithm.
Its form follows from the chosen smooth family and outer-response assumption; it is not a universal constraint on ultraviolet completions.
At the Schwarzschild temperature, $\Delta\tau_2=8M\log2$.
Proposition~\ref{prop:matching_scale_cancellation} shows why this same linear-delay coefficient appears in the near-horizon matching problem when the cutoff change is generated by gravitational redshift.
Section~\ref{sec:UV_interference} separates this leading delay from the fixed-plane cavity response, and Sec.~\ref{sec:UV_numerics} quantifies the residual without fitting additional phase parameters.

\section{Schwarzschild scattering and the passive thermal output}
\label{sec:composition}

\subsection{Passive horizon input}
\label{sec:quantum_completion}

For the application we use one effective passive environment input $c$, independent of the incident field of the isolated horizon component.
The standard attenuator relation is
\begin{equation}
a_+={\cal R}_H a_-+{\cal L}_H c,
\qquad |{\cal R}_H|^2+|{\cal L}_H|^2=1.
\label{eq:horizon_io}
\end{equation}
The outer factors constructed in Sec.~\ref{sec:causal_completion} supply causal ${\cal R}_H$ and ${\cal L}_H$.
Their modulus relation preserves the output commutator.
The state is a separate input: for a thermal port at inverse Killing temperature $\beta$, its occupation is $n_\beta(\omega)=(e^{\beta\omega}-1)^{-1}$.
The complete asymptotic scattering problem below uses independent external inputs; the internal cavity amplitudes are not independent canonical modes.

The noise commutator can equivalently be obtained from a Langevin boundary condition.
Appendix~\ref{app:response} gives that derivation and the equilibrium response kernels.
Additional annihilation and creation inputs, needed for more general channels, are described in Appendix~\ref{app:general_channel}.
An explicit causal unitary two-port realization is given in Appendix~\ref{app:rational}.

\subsection{Barrier composition and stability}

Let $\psi_{\rm up}$ be the solution of Eq.~(\ref{eq:rw_general}) that is purely outgoing at infinity,
\begin{equation}
\psi_{\rm up}(r_*)\sim e^{i\omega r_*},
\qquad
r_*\to+\infty.
\label{eq:up_infinity}
\end{equation}
Near the horizon side of the barrier,
\begin{equation}
\psi_{\rm up}(r_*)
\sim
B_{\rm out}(\omega)e^{i\omega r_*}
+B_{\rm in}(\omega)e^{-i\omega r_*}.
\label{eq:up_horizon}
\end{equation}
For incidence on the Regge--Wheeler barrier from the left, define \cite{Mark2017,Newton1982}
\begin{equation}
{\cal R}_{\bh}(\omega)
=\frac{B_{\rm in}}{B_{\rm out}},
\qquad
{\cal T}_{\bh}(\omega)
=\frac{1}{B_{\rm out}}.
\label{eq:BH_RT}
\end{equation}
For real $\omega$ and a real potential,
\begin{equation}
|{\cal R}_{\bh}|^2+|{\cal T}_{\bh}|^2=1.
\label{eq:BH_flux}
\end{equation}

Repeated propagation between the matching surface and the barrier gives the standard denominator
\begin{equation}
D(\omega)
=1-
{\cal R}_{\bh}(\omega)
{\cal R}_H(\omega)
e^{-2i\omega x_0}.
\label{eq:echo_denominator}
\end{equation}
For clarity, this denominator follows directly from the feedback equation
\begin{equation}
a_-=a_{\rm p}+{\cal R}_{\bh}e^{-2i\omega x_0}a_+,
\label{eq:feedback}
\end{equation}
where $a_{\rm p}$ is a prescribed inward driving amplitude at the matching surface before its first interaction with the horizon.
It is an inner source normalization, not by itself an independently accessible quantum port at infinity.
It can represent the horizon-going waveform used as the source term of an echo transfer function \cite{Mark2017,KaminagaEcho2026}.
Substitution of Eq.~(\ref{eq:feedback}) into Eq.~(\ref{eq:horizon_io}) gives the factor $D^{-1}$.
The part transmitted to infinity is ${\cal T}_{\bh}e^{-i\omega x_0}a_+$.
A physical asymptotic input and the prompt reflected wave are restored in Sec.~\ref{sec:complete_asymptotic}.
This fixes the phase convention used below; a change of reference plane alters phases but not any modulus relation.
It is useful to introduce the common exterior filter
\begin{equation}
{\cal F}(\omega)
=
\frac{{\cal T}_{\bh}(\omega)e^{-i\omega x_0}}
{D(\omega)}.
\label{eq:common_filter}
\end{equation}

The finite-plane barrier return is
\begin{equation}
\widehat{\cal R}_{\bh,x_0}(z)={\cal R}_{\bh}(z)e^{-2izx_0}.
\label{eq:delayed_barrier_return}
\end{equation}
For the exact matched exterior with nonnegative potential, the energy estimate in Appendix~\ref{app:stability} proves that this return is analytic and has modulus strictly smaller than one for $\ImPart z>0$.
Since an admissible ${\cal R}_H$ has modulus at most one there, their product cannot equal one.
Thus $D$ has no exponentially growing feedback pole.
This statement does not give a uniform real-axis bound on $D^{-1}$ when the loop modulus approaches one, for example at a threshold.

The inner-drive transfer and the horizon-port transfer are
\begin{equation}
{\cal K}={\cal F}{\cal R}_H,
\qquad {\cal L}_{\rm eff}={\cal F}{\cal L}_H.
\label{eq:K_transfer}
\end{equation}
Here $|{\cal K}|^2$ is a transfer strength in the prescribed-source normalization, not the reflection probability of an independent asymptotic input.
For nonzero ${\cal K}$, the common filter cancels in
\begin{equation}
\frac{|{\cal L}_{\rm eff}|^2}{|{\cal K}|^2}
=\frac{1-|{\cal R}_H|^2}{|{\cal R}_H|^2}.
\label{eq:central_ratio}
\end{equation}
The same denominator controls local resonance poles when the numerators do not vanish.
Appendix~\ref{app:poles} records this property and the first-order pole displacement.

\subsection{Complete asymptotic channel}
\label{sec:complete_asymptotic}

Introduce the two barrier-side amplitudes $u_+$ and $u_-$ and write the lossless two-port barrier relation as
\begin{equation}
\left(\begin{array}{c}
a_{\out}\\ u_-
\end{array}\right)
=
S_{\bh}
\left(\begin{array}{c}
a_{\rm in}\\ u_+
\end{array}\right),
\label{eq:barrier_two_port}
\end{equation}
where
\begin{equation}
S_{\bh}
=
\left(\begin{array}{cc}
{\cal R}_{\infty} & {\cal T}_{\bh}\\
{\cal T}_{\rm in}^{(0)} & {\cal R}_{\bh}
\end{array}\right)
\qquad
S_{\bh}^\dagger S_{\bh}=I.
\label{eq:barrier_matrix}
\end{equation}
Here $a_{\rm in}$ and $a_{\out}$ are the asymptotic modes in the left panel of Fig.~\ref{fig:channel_schematic}, while
\begin{equation}
u_+=e^{-i\omega x_0}a_+,
\qquad
u_-=e^{i\omega x_0}a_-.
\label{eq:barrier_surface_phases}
\end{equation}
Define the inward amplitude including its reference-plane phase by
\begin{equation}
{\cal T}_{\rm in}
=e^{-i\omega x_0}{\cal T}_{\rm in}^{(0)}.
\label{eq:Tin_phase_definition}
\end{equation}
The feedback equation then becomes
\begin{equation}
a_-
={\cal T}_{\rm in}a_{\rm in}
+{\cal R}_{\bh}e^{-2i\omega x_0}a_+.
\label{eq:complete_feedback}
\end{equation}

Solving the feedback equation together with Eq.~(\ref{eq:horizon_io}) gives
\begin{eqnarray}
a_{\out}&=&{\cal S}_{\rm tot}a_{\rm in}+{\cal L}_{\rm eff}c,
\label{eq:complete_asymptotic_map}\\
{\cal S}_{\rm tot}&=&{\cal R}_\infty+{\cal K}{\cal T}_{\rm in},
\label{eq:Stot}\\
{\cal K}_\infty&=&{\cal K}{\cal T}_{\rm in}.
\label{eq:asymptotic_input_transfer}
\end{eqnarray}
The first term in ${\cal S}_{\rm tot}$ is prompt reflection, and ${\cal K}_\infty$ is the echo correction for incidence from infinity.
For a reciprocal Schwarzschild barrier, $|{\cal T}_{\rm in}|=|{\cal T}_{\bh}|$.
Barrier unitarity gives
\begin{equation}
1-|{\cal S}_{\rm tot}|^2
=|{\cal F}|^2(1-|{\cal R}_H|^2)
=|{\cal L}_{\rm eff}|^2.
\label{eq:Stot_absorption_identity}
\end{equation}
This is the standard absorption-noise identity \cite{Beenakker1998}; the short algebraic derivation is included in Appendix~\ref{app:general_channel}.
It retains the interference between prompt reflection and echoes.

For vacuum incident from infinity and a passive thermal horizon input, the normally ordered horizon contribution is
\begin{equation}
N_{\out}^{H}=|{\cal L}_{\rm eff}|^2n_\beta
=\left(1-|{\cal S}_{\rm tot}|^2\right)n_\beta.
\label{eq:Nout}
\end{equation}
With an uncorrelated incident occupation $n_\infty$, one adds $|{\cal S}_{\rm tot}|^2n_\infty$.
In particular, $n_\infty=n_\beta$ gives the equilibrium output $n_\beta$.
Assigning a thermal horizon input and vacuum at infinity is an effective Unruh-type prescription \cite{Unruh1976}, not a claim that the full Unruh state is KMS.

For vacuum incidence from infinity, the complete symmetrized output variance is
\begin{equation}
{\cal S}_{\out}^{\rm full}=\frac12+N_{\out}^{H}.
\label{eq:complete_passive_covariance}
\end{equation}
Two vacuum inputs therefore give exactly the vacuum variance.
A vacuum term assigned to the horizon port alone is not an additional observable noise floor.
At perfect absorption, ${\cal R}_H\to0$, one has
\begin{equation}
N_{\out}^{H}\longrightarrow |{\cal T}_{\bh}|^2n_\beta.
\label{eq:greybody_thermal}
\end{equation}
At $\beta=\beta_H$ this is the usual greybody-filtered Planck form \cite{Hawking1975,Page1976}.
The temperature has been imposed on the input; the channel does not derive Hawking radiation or its partner entanglement.

\subsection{Thermal matching and its normalization}
\label{sec:boltzmann_matching}

Equations~(\ref{eq:central_ratio}) and (\ref{eq:Nout}) give
\begin{equation}
\frac{N_{\out}^{H}}{|{\cal K}|^2}
=\frac{1-|{\cal R}_H|^2}{|{\cal R}_H|^2}\,n_\beta.
\label{eq:Nout_consistency}
\end{equation}
If $|{\cal R}_H|^2=e^{-\beta\omega}$ in a band and the passive input is thermal at the same $\beta$, the right-hand side equals one:
\begin{equation}
N_{\out}^{H}(\omega)=|{\cal K}(\omega)|^2.
\label{eq:boltzmann_equality}
\end{equation}
The exact-band completion realizes this conditional identity throughout $0<\omega\leq\omega_b$.
For a coherent inner drive in a normalized narrow wave packet with amplitude $\alpha$, suppose that the transfer coefficients and the noise occupation vary negligibly across its bandwidth.
The echo occupation is then approximately $|{\cal K}|^2|\alpha|^2$, and the noise-to-coherent ratio is $1/|\alpha|^2$ to the same accuracy.
This is consistent with omitting stochastic fluctuations in a large-amplitude calculation \cite{OshitaWangAfshordi2020}.

For incidence from infinity, the echo correction is instead ${\cal K}_\infty$, and
\begin{equation}
\frac{N_{\out}^{H}}{|{\cal K}_\infty|^2}
=\frac{1}{|{\cal T}_{\rm in}|^2}
\frac{1-|{\cal R}_H|^2}{|{\cal R}_H|^2}\,n_\beta.
\label{eq:asymptotic_ratio}
\end{equation}
These are stationary spectral relations.
A time window also acts on the noise covariance and cannot be inserted without bandwidth and signal-separation assumptions.
Active inputs require the occupation formula in Appendix~\ref{app:general_channel}, not Eq.~(\ref{eq:boltzmann_equality}).

For the smooth family in Eq.~(\ref{eq:UV_flux_reflectivity}) and a KMS input at inverse temperature $\beta$, the equality in Eq.~(\ref{eq:boltzmann_equality}) is replaced by
\begin{equation}
\frac{N_{\out}^{H}(\omega)}{|{\cal K}(\omega)|^2}
=\chi_{\rm UV}(\omega)
:=\frac{W_{\rm UV}(\omega)^{-1}-1}
{e^{\beta\omega}-1}.
\label{eq:UV_matching_factor}
\end{equation}
For $\omega>0$, define
\begin{equation}
\Delta_{\rm UV}(\omega)
=\beta\left[\omega-\omega_c\log(1+\omega/\omega_c)\right].
\label{eq:UV_log_error}
\end{equation}
The inequalities $x-x^2/2\leq\log(1+x)\leq x$ for $x\geq0$ give $0\leq\Delta_{\rm UV}\leq\beta\omega^2/(2\omega_c)$.
Therefore
\begin{equation}
0<\chi_{\rm UV}\leq1,
\qquad
1-\chi_{\rm UV}
=\frac{1-e^{-\Delta_{\rm UV}}}{1-e^{-\beta\omega}}
\leq\frac{\beta\omega^2}{2\omega_c(1-e^{-\beta\omega})}.
\label{eq:UV_error_bound}
\end{equation}
In the regime $\beta\omega\gtrsim1$, a small fractional correction requires $\beta\omega^2/\omega_c$ to be small; the condition $\omega/\omega_c\ll1$ alone does not control it.
This matters when $\beta\omega_c$ is large.
This bound is specific to the smooth family; it is not a lower bound on the correction of an arbitrary causal completion.
The ratio is phase independent, although the absolute spectra are not.

At fixed frequency, $\chi_{\rm UV}\to1$ as $\omega_c\to\infty$.
Figure~\ref{fig:boltzmann_uv} shows the finite-band correction, while Eq.~(\ref{eq:UV_phase_scale_change}) explains why this convergence does not extend to the phase.

\begin{figure}[htbp]
\centering
\includegraphics[width=0.82\linewidth]{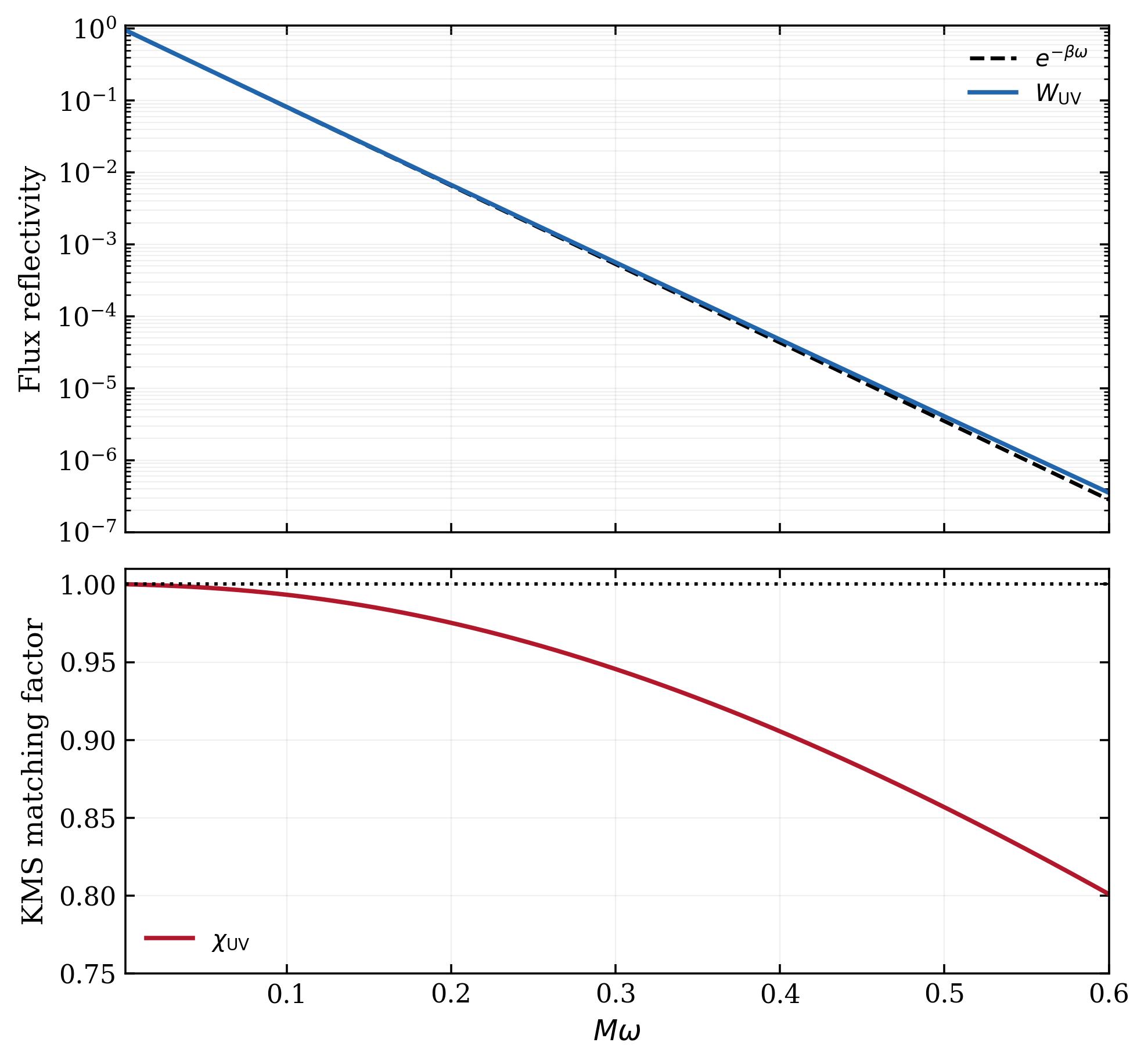}
\caption{Causal ultraviolet completion of the Boltzmann flux law for $\beta=8\pi M$ and $M\omega_c=20$.
The upper panel compares $W_{\rm UV}$ in Eq.~(\ref{eq:UV_flux_reflectivity}) with $e^{-\beta\omega}$.
The lower panel shows the correction factor $\chi_{\rm UV}$ in Eq.~(\ref{eq:UV_matching_factor}).
The completion tracks the exponential law in the lower part of the effective band, while the relative departure increases across the plotted band according to Eq.~(\ref{eq:UV_error_bound}); the crossover $M\omega_c=20$ is outside the plot.
The ultraviolet tail remains compatible with a bounded causal outer function.
The plot was produced using AI-assisted Python code described in Appendix~\ref{app:numerical}.}
\label{fig:boltzmann_uv}
\end{figure}

The absolute scale should also be kept separate from the normalized equality.
For a Schwarzschild black hole, $\beta_H=8\pi M$, and hence at the representative frequency $M\omega=0.35$ in the axial ringdown band the Boltzmann flux reflectivity is approximately
\begin{equation}
e^{-8\pi M\omega}
\simeq1.5\times10^{-4}.
\label{eq:ringdown_boltzmann_scale}
\end{equation}
At lower frequencies the reflectivity can be appreciable, but the $s=2$, $\ell=2$ barrier transmission is then strongly suppressed.
Thus Eqs.~(\ref{eq:boltzmann_equality}) and (\ref{eq:UV_matching_factor}) are primarily consistency relations; they do not by themselves imply an observable quantum-noise signal.

\subsection{Axial barrier and the outgoing spectrum}
\label{sec:numerics}

We use the axial $s=2$, $\ell=2$ potential with $M=1$.
The outgoing solution is initialized at $r_*=140$ and integrated inward to $r_*=-50$.
The coefficients in Eq.~(\ref{eq:BH_RT}) are extracted from the free-wave decomposition.
On the frequency grid the flux defect is below $4\times10^{-8}$; this tests conservation, not the error from truncating the infinite domain.
Outer-boundary changes give relative variations below $0.73\%$ for the positive transfer strengths checked in Appendix~\ref{app:numerical}.
The illustrative spectra below retain the original finite-domain calculation.
Appendix~\ref{app:rational} gives separate channel checks for a rational response.

We apply the smooth completion to this numerical barrier, using the outer phase in Eqs.~(\ref{eq:outer_phase_quadrature}) and (\ref{eq:UV_phase_integral}).
Thus
\begin{equation}
{\cal R}_{H,{\rm UV}}(\omega)=\sqrt{W_{\rm UV}(\omega)}e^{i\phi_c(\omega)}
\label{eq:numerical_UV_reflection}
\end{equation}
and define
\begin{equation}
D_{\rm UV}(\omega)
:=1-{\cal R}_{\bh}(\omega){\cal R}_{H,{\rm UV}}(\omega)
e^{-2i\omega x_0}.
\label{eq:numerical_UV_denominator}
\end{equation}
With one KMS loss input, the resulting horizon-sourced occupation at infinity is
\begin{equation}
N_{\out}^{H}(\omega)
=\frac{|{\cal T}_{\bh}|^2
[1-W_{\rm UV}(\omega)]}{|D_{\rm UV}(\omega)|^2}
\frac{1}{e^{\beta\omega}-1}.
\label{eq:UV_asymptotic_noise_spectrum}
\end{equation}
This is the spectral occupation at infinity for the selected phase and state.
Its conversion to energy flux is given by Eq.~(\ref{eq:stationary_energy_flux}).
It is more information than the ratio in Eq.~(\ref{eq:UV_matching_factor}), but depends on the chosen ultraviolet completion and phase.

One can also bound this spectrum without selecting the horizon phase.
Put $a(\omega)=|{\cal R}_{\bh}(\omega)|\sqrt{W_{\rm UV}(\omega)}$ and $A(\omega)=|{\cal T}_{\bh}(\omega)|^2[1-W_{\rm UV}(\omega)]n_\beta(\omega)$.
For $\omega>0$, $a<1$, and the triangle inequality gives
\begin{equation}
\frac{A(\omega)}{[1+a(\omega)]^2}
\leq N_{\out}^{H}(\omega)
\leq\frac{A(\omega)}{[1-a(\omega)]^2}.
\label{eq:phase_independent_noise_bounds}
\end{equation}
The bounds hold for every admissible phase with the stated modulus.
They do not assert that both bounds can be attained at all frequencies by one causal response.
When the horizon reflectivity is small, they show directly that the noise stays close to the perfectly absorbing spectrum even if the echo phase is uncertain.

\begin{figure}[htbp]
\centering
\includegraphics[width=0.82\linewidth]{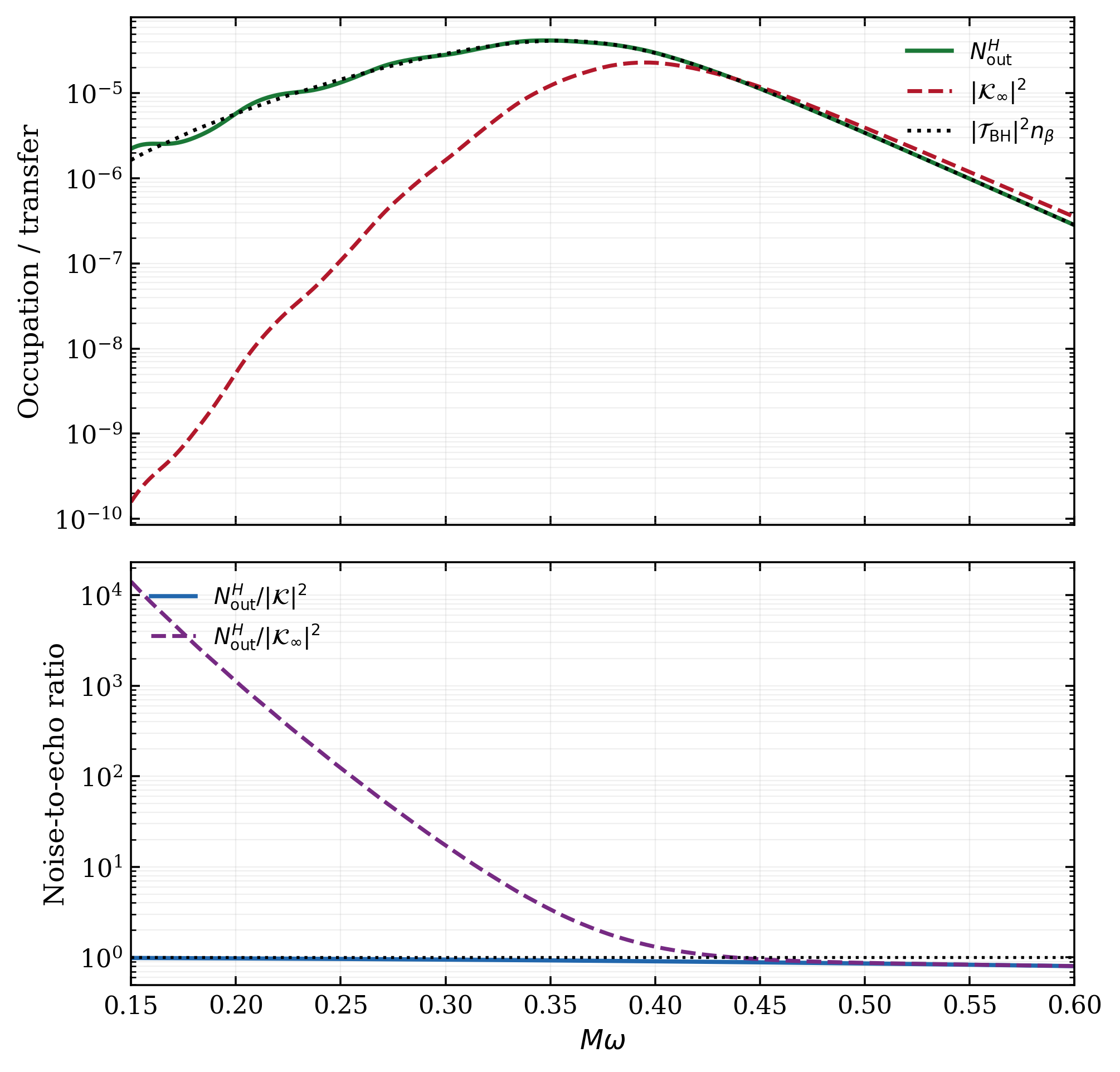}
\caption{Ultraviolet-completed quantum-noise spectrum for the axial $s=2$, $\ell=2$ barrier, with $\beta=8\pi M$, $M\omega_c=20$, and $x_0/M=-30$.
The upper panel compares the horizon-sourced occupation in Eq.~(\ref{eq:UV_asymptotic_noise_spectrum}), the coherent echo transfer $|{\cal K}_\infty|^2$ for a unit occupation incident from infinity, and the perfectly absorbing result $|{\cal T}_{\bh}|^2n_\beta$.
The lower panel shows the horizon-normalized ratio $N_{\out}^{H}/|{\cal K}|^2=\chi_{\rm UV}$ and the spectral noise-to-echo ratio $N_{\out}^{H}/|{\cal K}_\infty|^2=\chi_{\rm UV}/|{\cal T}_{\rm in}|^2$.
The large low-frequency value of the latter is the inward greybody suppression, not a cavity enhancement of the local identity.
The plot was produced using AI-assisted Python code described in Appendix~\ref{app:numerical}.}
\label{fig:UV_quantum_noise}
\end{figure}

\subsection{Fixed-plane interference and delay alignment}
\label{sec:UV_interference}

Proposition~\ref{prop:matching_scale_cancellation} compares the specified response family at different matching surfaces and fixed local cutoff, for which the redshifted cutoff and the propagation length change together.
It is also useful to vary $\omega_c$ at fixed $x_0$ in order to isolate the response-side phase before this compensation.
Let $D_{\rm UV}$ be Eq.~(\ref{eq:echo_denominator}) with the smooth outer reflection.
On a positive real-frequency interval with ${\cal R}_{\bh}\neq0$, choose a continuous barrier phase and write
\begin{eqnarray}
a_c(\omega)&=&|{\cal R}_{\bh}(\omega)|\sqrt{W_{\rm UV}(\omega)},\\
\Theta_c(\omega)&=&\arg{\cal R}_{\bh}(\omega)+\phi_c(\omega)+2L\omega,\qquad L=-x_0,\nonumber\\
{\cal C}_c(\omega):=|D_{\rm UV}(\omega)|^{-2}
&=&[1+a_c^2-2a_c\cos\Theta_c]^{-1}.
\label{eq:UV_cavity_modulation}
\end{eqnarray}
The locations $\omega_n^{\rm ph}$ satisfying $\Theta_c(\omega_n^{\rm ph})=2\pi n$ identify constructive phase alignment.
They need not coincide with maxima of ${\cal C}_c$ when $a_c$ varies, and they are not complex quasinormal frequencies.
The derivative of a simple phase crossing is given in Appendix~\ref{app:poles}.
The following comparison retains both the phase and the loop modulus.

For $\lambda>1$, the leading delay can be aligned by adding a positive delay to the lower-cutoff response:
\begin{eqnarray}
{\cal R}^{(d)}_{H,c}(z)&=&e^{iz\Delta\tau_\lambda}{\cal R}_{H,c}(z),
\label{eq:UV_delayed_response}\\
{\cal C}^{(d)}_c(\omega)&=&
[1+a_c^2-2a_c\cos(\Theta_c+\omega\Delta\tau_\lambda)]^{-1}.
\label{eq:UV_delayed_cavity}
\end{eqnarray}
The factor $e^{iz\Delta\tau_\lambda}$ is bounded by unity in $\mathbb C_+$ and has unit real-axis modulus, so this operation preserves causal passivity.
The barrier and reference plane remain fixed.
In the round-trip denominator it is equivalent to increasing $L$ by $\Delta\tau_\lambda/2$.
We compare this delayed response with ${\cal R}_{H,\lambda c}$: their horizon phase difference is exactly $\epsilon_{\lambda,c}$, while their different moduli are retained.
This fixed-plane delay alignment is a diagnostic; it is not itself a displacement of the matching surface.
Under the local-cutoff interpretation of Sec.~\ref{sec:redshift_local_cutoff}, an outward shift of the matching surface raises $\omega_c$ but decreases $L$, and the two leading phase changes have the opposite signs given in Proposition~\ref{prop:matching_scale_cancellation}.
The numerical alignment below measures how much of the fixed-plane cutoff dependence is carried by this linear-delay component.
The remaining dispersion and modulus differences must be quantified before assigning physical significance to an ultraviolet interference shift.

\subsection{Numerical cutoff comparison}
\label{sec:UV_numerics}

We compare $M\omega_c=20$ and $40$ at $\beta=8\pi M$ and $x_0/M=-30$, with the same outer-response assumption.
This is deliberately a fixed-plane comparison: it isolates the response-side cutoff dependence before the matching-scale compensation of Proposition~\ref{prop:matching_scale_cancellation}.
In the local interpretation of Eq.~(\ref{eq:cutoff_proper_distance}), the two cutoffs correspond at leading near-horizon order to $d_0\Omega_{\rm UV}\simeq80$ and $160$, respectively.
These values satisfy the hierarchy in Eq.~(\ref{eq:physical_cutoff_hierarchy}) throughout the plotted band.
At $M\omega=0.35$, Table~\ref{tab:UV_phase} separates the inner thermal ratio from the horizon phase and its group delay.
Although the ratio changes by about $3.6$ percentage points, the phase changes by $1.940$ radians.
The horizon contribution to the group delay changes by $5.538M$, close to $8M\log2=5.545M$.
This is only the response part of the delay; a correlated change of the matching surface within the specified local-response family changes the propagation term with the compensating leading sign.

\begin{table}[htbp]
\caption{Smooth-completion data at $M\omega=0.35$.
The phase is continuous from $\phi_c(0^+)=0$, without an added inner factor.}
\label{tab:UV_phase}
\centering
\begin{tabular}{cccc}
\toprule
$M\omega_c$ & $\chi_{\rm UV}$ & $\phi_c$ (rad) & $\tau_H/M$\\
\midrule
20 & 0.92673 & 14.12880 & 32.37435\\
40 & 0.96246 & 16.06873 & 37.91252\\
\bottomrule
\end{tabular}

\end{table}

At the same frequency, subtracting $\omega\Delta\tau_2$ leaves only
$\epsilon_{2,c}=-8.8940\times10^{-4}$ rad for $M\omega_c=20$.
Over $0.15\leq M\omega\leq0.60$, its largest absolute value is
$3.9002\times10^{-3}$ rad, attained at the upper endpoint.
The leading dispersive term in Eq.~(\ref{eq:UV_residual_expansion}) approximates the exact residual to within $2.7\times10^{-6}$ rad on this band.
Thus the large unaligned phase difference mainly measures the leading delay, not a comparably large residual dispersion.

Figure~\ref{fig:UV_phase_cavity} compares the cavity modulation before and after adding $\Delta\tau_2=5.54518M$ to the lower-cutoff response.
This is equivalent in the denominator to $\Delta L=2.77259M$.
On the 301-point grid spanning the band above, the maximum of
$|{\cal C}_{40/M}/{\cal C}_{20/M}-1|$ is approximately $0.2100$, whereas the maximum of
$|{\cal C}_{40/M}/{\cal C}^{(d)}_{20/M}-1|$ is approximately $1.0940\times10^{-3}$.
The reduction from about $21\%$ to $0.11\%$ includes the modulus changes as well as the residual phase; it is not a phase-only comparison.
No delay or phase has been optimized to minimize these differences.
The maximum delay-aligned relative difference changes by less than
$2.1\times10^{-9}$ in absolute value under the paired boundary and grid checks reported in Appendix G.
The near coincidence therefore provides a numerical measure of the small remainder after the linear-delay component has been removed, while Proposition~\ref{prop:matching_scale_cancellation} gives the separate leading cancellation when the matching surface and redshifted cutoff are varied together in the specified family.

For a common continuous branch with $\arg{\cal R}_{\bh}(0.15/M)$ chosen in $(-\pi,\pi]$, Table~\ref{tab:UV_crossings} separates the unaligned and delay-aligned shifts of constructive phase crossings.
The latter are only $1.90\times10^{-6}$ to $6.47\times10^{-6}$ in $M\omega$.
Paired boundary and grid checks for these smaller differences are given in Appendix~\ref{app:numerical}.
These frequencies describe real-axis phase alignment, not complex quasinormal poles.
The example demonstrates that a substantial unaligned shift need not imply strong discrimination between ultraviolet scales when an inner delay is free.
Establishing detectability would additionally require a source model, detector noise, and an inference calculation.

\begin{figure}[htbp]
\centering
\includegraphics[width=0.70\linewidth]{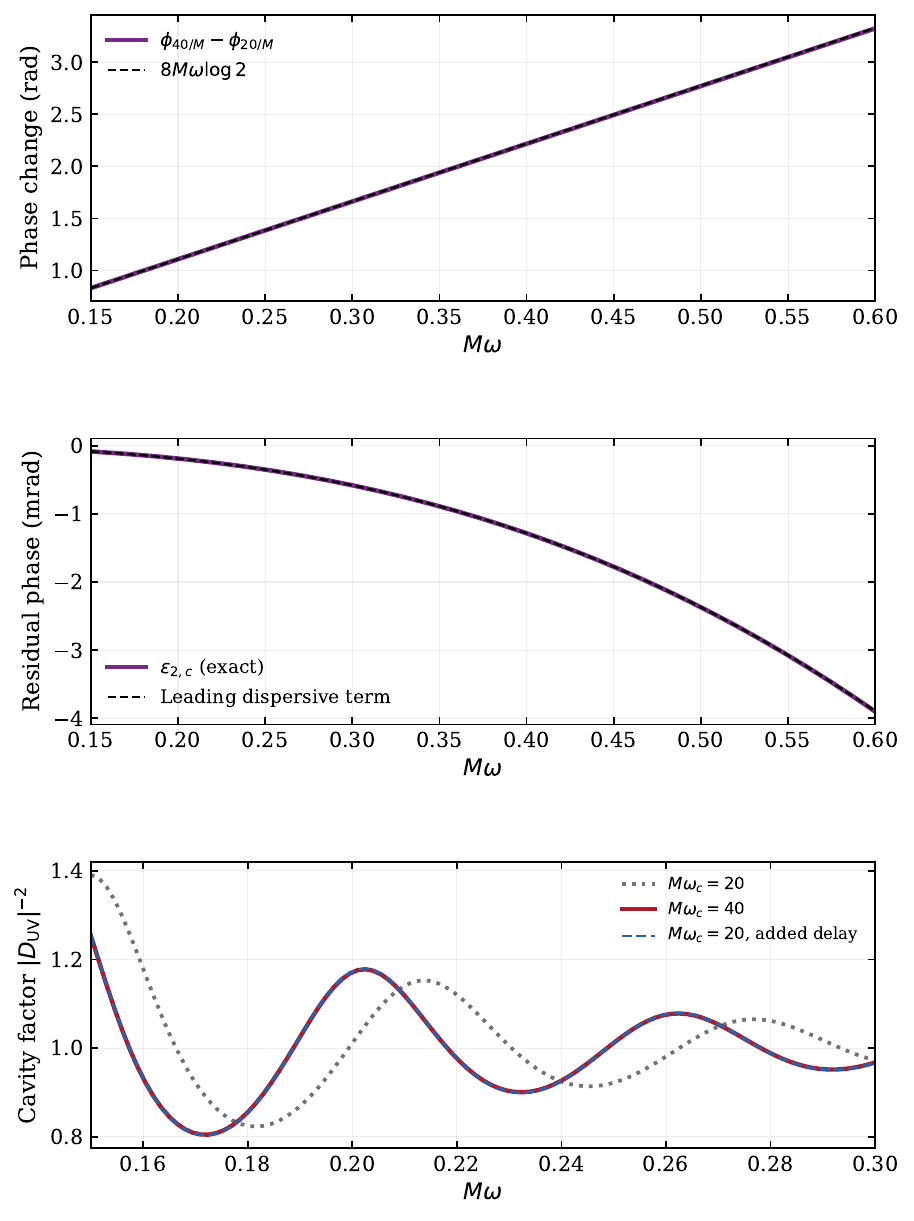}
\caption{Fixed-plane cutoff comparison at $x_0/M=-30$ and $\beta=8\pi M$.
Top: the outer-phase change from $M\omega_c=20$ to $40$ and its leading delay term $8M\omega\log2$.
Middle: the residual after subtracting this term, in milliradians, and the leading dispersive approximation in Eq.~(\ref{eq:UV_residual_expansion}).
Bottom: the original cavity factors for both cutoffs, together with the $M\omega_c=20$ response after adding the positive delay $\Delta\tau_2=8M\log2$.
The delayed blue curve nearly coincides with the red $M\omega_c=40$ curve.
Both moduli are retained.
This delay alignment is a fixed-plane diagnostic of the linear phase component; the matching-scale comparison additionally includes the redshift-related change of propagation length in Proposition~\ref{prop:matching_scale_cancellation}.
The bottom panel shows the lower part of the computed band; the relative differences quoted in the text use the full band.
The plot was produced using AI-assisted Python code described in Appendix~\ref{app:numerical}.}
\label{fig:UV_phase_cavity}
\end{figure}

\section{Discussion and conclusions}
\label{sec:conclusion}
\label{sec:limitations}

The two main conclusions concern causal realizability and the low-frequency phase of a class of ultraviolet completions.
An exact Boltzmann modulus on the entire positive frequency axis violates the logarithmic integrability condition for a nonzero bounded causal response at a fixed plane.
Exact finite-band matching remains possible.
This distinction separates the global transfer-function question from detailed balance or a low-energy reflection prescription.

Proposition~\ref{prop:general_outer_phase} identifies sufficient conditions under which a common Boltzmann slope gives the same logarithmic coefficient in the outer phase.
The completion enters the linear term through the explicit functional $C_h$.
At the Hawking temperature, redshifting a fixed local cutoff makes the logarithmic phase term cancel the leading matching-distance dependence of the tortoise propagation phase.
The geometric correction is quadratic in $d_0/M$.
A distinct dispersive correction depends on the completion, and its size relative to the geometric correction requires an additional hierarchy of scales.
The overlap requires $4M\omega\ll d_0\Omega_{\rm UV}$ as well as $d_0\ll M$.
It need not include the ringdown band if both the matching distance and the inverse local cutoff are of order the Planck length.
The strict event-horizon limit at fixed Killing frequency also leaves this regime.

This family comparison is not exact reference-plane invariance.
Exact transport of one physical boundary response must obey Eq.~(\ref{eq:reference_plane_transport}) or the wave-equation flow in Eq.~(\ref{eq:admittance_transport}).
The prescribed smooth family changes its modulus as well as its phase.
Moreover, an inner factor can add a delay not determined by the modulus.
The cancellation therefore applies to the stated class of outer responses; causality alone does not determine the complete echo delay.
Within these families, an echo delay cannot determine the matching distance without assumptions on the proper-frame cutoff and the inner phase.

The fixed-plane axial calculation gives a related diagnostic.
For $M\omega_c=20$ and $40$, removing the predicted linear-delay difference reduces the maximum relative cavity-factor difference from about $21\%$ to $0.11\%$ on the stated grid.
Both the modulus change and the residual dispersion are retained.
This illustrates a degeneracy between the cutoff-dependent phase and an inner delay; it is not a detector-level distinguishability calculation.

The passive quantum completion propagates horizon fluctuations through the same barrier and cavity and satisfies the standard absorption-noise identity.
Its thermal equality under finite-band Boltzmann matching is conditional on both passivity and the input temperature.
Neither it nor the perfect-absorption limit specifies the microscopic horizon state.
Linear response and two-point noise data also do not provide an injective reconstruction of horizon microstates.

The analysis assumes stationary, linear Schwarzschild scattering.
Kerr superradiance requires the Teukolsky flux normalization and an active-channel treatment; correlated or squeezed inputs require their full covariance, and nonstationary formation or evaporation requires two-time kernels.
A microscopic model would have to determine the response, state, cutoff, and any inner phase together.
The present constructions specify which global extensions are excluded and which part of the low-energy phase is common to completions satisfying the stated assumptions.

\section*{Acknowledgments}
OpenAI ChatGPT, including GPT-6 Sol and Astra Pro, was used for English editing, literature searches and bibliographic checks, and discussions and consistency checks of mathematical arguments. The Extra High thinking setting was used in some sessions. Other ChatGPT model versions may have been used during earlier stages of manuscript preparation, but their exact identifiers could not be reliably reconstructed. The author reviewed the AI-assisted material and takes full responsibility for the mathematical results, numerical calculations, and scientific conclusions. The use of AI-generated Python code for numerical calculations and figure preparation is described in Appendix~\ref{app:numerical}.

\section*{Data Availability}
No external data sets were used.
The script \path{generate_bhmap15_figures.py} reproduces Figs.~\ref{fig:boltzmann_uv}, \ref{fig:UV_quantum_noise}, \ref{fig:UV_phase_cavity}, and \ref{fig:coherent}, and exports the fixed-plane cutoff-comparison data.
The matching-scale plot in Fig.~\ref{fig:matching_scale} is embedded in the TeX source; its evaluation is described in Appendix~\ref{app:numerical}.
Figure~\ref{fig:channel_schematic} is supplied as a separate vector PDF.
Equations, parameters, integration boundaries, and numerical checks are given in the text and appendices.

\appendix
\numberwithin{equation}{section}
\renewcommand{\theequation}{\thesection\arabic{equation}}

\section{Canonical mode and flux normalization}
\label{app:normalization}

For each independent real axial harmonic with $\ell\geq2$, choose a canonically normalized master field satisfying the Regge--Wheeler equation.
We suppress the harmonic index below.
Its quadratic action has the form \cite{Moncrief1974}
\begin{equation}
S_{\ell m}^{(2)}
=\frac12\int dt\,dr_*
\left[
(\partial_t\Psi)^2-(\partial_{r_*}\Psi)^2
-V_{2\ell}\Psi^2
\right].
\label{eq:canonical_RW_action}
\end{equation}
Near a plane on which the potential is negligible, the corresponding operator has the expansion
\begin{equation}
\widehat\Psi(t,r_*)
=\int_0^\infty\frac{d\omega}{\sqrt{4\pi\omega}}
\left[
a_-(\omega)e^{-i\omega[t+(r_*-x_0)]}
+a_+(\omega)e^{-i\omega[t-(r_*-x_0)]}
+{\rm H.c.}
\right],
\label{eq:canonical_mode_expansion}
\end{equation}
where H.c. denotes the Hermitian-conjugate negative-frequency part required by
the Hermiticity of the real master field.
For an isolated scattering component, the incoming modes are normalized by
\begin{equation}
[a_-(\omega),a_-^\dagger(\omega')]
=\delta(\omega-\omega').
\label{eq:canonical_mode_commutator}
\end{equation}
The completed outgoing mode has the same self-commutator, but the incoming and outgoing modes are not independent.
For example, the isolated passive horizon relation gives $[a_-(\omega),a_+^\dagger(\omega')]={\cal R}_H(\omega')^*\delta(\omega-\omega')$.
After closing the cavity, $a_\pm$ denote internal field amplitudes determined by the external inputs.
Their commutators must then be calculated from that solution, rather than imposed as independent canonical commutators.
The normally ordered energy carried by one canonically normalized chiral wave packet is
\begin{equation}
{\cal E}
=\int_0^\infty d\omega\,
\omega\,\langle a^\dagger(\omega)a(\omega)\rangle.
\label{eq:canonical_energy_flux}
\end{equation}
For a finite excitation, Eq.~(\ref{eq:canonical_energy_flux}) is a wave-packet energy.
For a stationary state we instead define the spectral occupation by $\langle a^\dagger(\omega)a(\omega')\rangle=n(\omega)\delta(\omega-\omega')$.
The corresponding mean energy flux for one chiral mode is
\begin{equation}
P=\int_0^\infty\frac{d\omega}{2\pi}\,\omega n(\omega).
\label{eq:stationary_energy_flux}
\end{equation}
The factor $1/(2\pi)$ follows from the field normalization in Eq.~(\ref{eq:canonical_mode_expansion}), even with the delta function convention in Eq.~(\ref{eq:canonical_mode_commutator}).
The occupations in the text are spectral occupations in this sense.
For a normalized packet $a_f=\int_0^\infty f(\omega)^*a(\omega)d\omega$, its occupation is $\int|f|^2n\,d\omega$.
A frequencywise transfer formula applies to a packet without averaging only when its coefficients are approximately constant over the packet bandwidth.
Conversion from the canonical field to a metric waveform does not change the dimensionless transfer ratios.

\section{Boundary response and noise kernels}
\label{app:response}
\label{app:influence}

\subsection{Retarded self-energy}

A convenient deformation is
\begin{equation}
\Lambda(\omega)
=-i\omega+\Sigma_R(\omega),
\label{eq:Lambda_Sigma}
\end{equation}
so that
\begin{equation}
{\cal R}_H(\omega)
=
\frac{\Sigma_R(\omega)}
{2i\omega-\Sigma_R(\omega)}.
\label{eq:RH_Sigma}
\end{equation}
Here $\Sigma_R$ can be viewed as a matched retarded self-energy generated by unresolved near-horizon dynamics.

The self-energy convention of Eq.~(\ref{eq:Lambda_Sigma}) gives
\begin{equation}
\Sigma_R(z)
=\frac{2iz{\cal R}_H(z)}{1+{\cal R}_H(z)}.
\label{eq:Sigma_from_RH}
\end{equation}
This direction of construction is useful because stability is imposed on the measurable reflection amplitude itself.
In the opposite direction, analyticity of $\Sigma_R$ alone is insufficient: one must also require
\begin{equation}
2iz-\Sigma_R(z)\neq0,
\qquad
\ImPart z>0,
\label{eq:no_UHP_denominator_zero}
\end{equation}
in order that Eq.~(\ref{eq:RH_Sigma}) define a stable causal response.

For the following dispersion relation, assume that the retarded self-energy has a Fourier--Laplace transform analytic throughout $\ImPart z>0$.
Retarded support alone is not sufficient for this domain if an exponentially growing response is allowed.
If subtracting local terms leaves a remainder tending to a real constant $\Sigma_\infty$ and a convergent integral, its real and imaginary parts obey
\begin{equation}
\RePart\Sigma_R(\omega)-\Sigma_\infty
=
\frac{1}{\pi}\PV
\int_{-\infty}^{\infty}
\frac{\ImPart\Sigma_R(\omega')}
{\omega'-\omega}\dd\omega'.
\label{eq:KK}
\end{equation}
If these convergence conditions fail, further subtractions are needed; Eq.~(\ref{eq:KK}) is not an unsubtracted formula for an arbitrary growing self-energy.
Thus a response-generated reflectivity is not an arbitrary complex function: dissipation and phase are linked by causality, while stability adds the independent no-pole requirement in Eq.~(\ref{eq:no_UHP_denominator_zero}).

\subsection{Langevin boundary condition and the noise commutator}

The relation between boundary dissipation and the additional input operator can be made explicit.
Stochastic conditions of this type arise by integrating out degrees of freedom between a horizon and a stretched horizon \cite{SonTeaney2009,IsoOkazawa2011,CaronHuotCheslerTeaney2011}.
Here we use the corresponding frequency-domain operator condition for the canonically normalized mode,
\begin{equation}
\widehat\Psi'(x_0,\omega)
=\Lambda(\omega)\widehat\Psi(x_0,\omega)
+\Xi_{\rm phys}(\omega),
\label{eq:langevin_boundary}
\end{equation}
where $\Xi_{\rm phys}$ is an operator-valued boundary force in the field normalization of Eq.~(\ref{eq:canonical_RW_action}).
The positive-frequency part of Eq.~(\ref{eq:canonical_mode_expansion}) gives
\begin{equation}
(i\omega-\Lambda)a_+
=(i\omega+\Lambda)a_-
+\sqrt{4\pi\omega}\,\Xi_{\rm phys}.
\label{eq:physical_to_canonical_force}
\end{equation}
It is therefore useful to define the canonically mode-normalized input
\begin{equation}
\Xi_c(\omega)
:=\sqrt{4\pi\omega}\,\Xi_{\rm phys}(\omega).
\label{eq:canonical_boundary_force}
\end{equation}
Equation~(\ref{eq:physical_to_canonical_force}) then gives
\begin{equation}
a_+
={\cal R}_H a_-
+\frac{\Xi_c}{i\omega-\Lambda}.
\label{eq:langevin_io}
\end{equation}
For a passive set of independent input modes, write
\begin{equation}
\Xi_c(\omega)
=\sum_\alpha g_\alpha(\omega)c_\alpha(\omega),
\qquad
{\cal L}_{H\alpha}(\omega)
=\frac{g_\alpha(\omega)}{i\omega-\Lambda(\omega)}.
\label{eq:Xi_decomposition}
\end{equation}
Canonical commutator preservation and Eq.~(\ref{eq:passivity_identity}) then give the boundary fluctuation sum rule
\begin{equation}
\sum_\alpha|g_\alpha(\omega)|^2
=-4\omega\ImPart\Lambda(\omega),
\qquad
\omega>0.
\label{eq:boundary_noise_sum_rule}
\end{equation}
The right-hand side is nonnegative for a passive boundary and reproduces Eq.~(\ref{eq:horizon_io}).
The force normalization in Eq.~(\ref{eq:canonical_boundary_force}) relates this single-mode identity to the physical fluctuation-dissipation kernels: absorption and the force commutator are determined by the same admittance.

The sum rule fixes the real-frequency norm of the noise coupling, but not its phase or analytic continuation.
In the complex resonance statements of Appendix~\ref{app:poles}, we therefore assume that the local coefficients ${\cal L}_{H\alpha}(z)$ and ${\cal M}_{H\mu}(z)$ are stable causal transfer amplitudes, analytic in the upper half-plane and meromorphic in the continuation domain under consideration.
The rational model in Appendix~\ref{app:rational} gives an explicit analytic spectral factor rather than imposing only the modulus relation.

\subsection{Response and noise kernels}

A microscopic or effective hidden sector can be described by a Hilbert space $\HH_{\inr}$, Hamiltonian $H_{\inr}$, stationary state $\sigma$, and a Hermitian operator $O=O^\dagger$ coupled to an exterior amplitude.
We subtract its mean, so that $\Tr(\sigma O)=0$.
The two connected quadratic kernels are
\begin{equation}
G_R(t)
=-i\theta(t)
\Tr\left(\sigma[O(t),O(0)]\right)
\label{eq:GR_def}
\end{equation}
and
\begin{equation}
G_H(t)
=\frac12
\Tr\left(\sigma\{O(t),O(0)\}\right).
\label{eq:GH_def}
\end{equation}
The retarded kernel controls causal response, while the Hadamard kernel controls symmetrized fluctuations.
In a microscopic matching, $G_R$ contributes to $\Sigma_R$ or $\Lambda$, while $G_H$ fixes the covariance of $\Xi_{\rm phys}$.
Their normalizations depend on the coupling convention; after the conversion in Eq.~(\ref{eq:canonical_boundary_force}), the commutator part must reduce to Eq.~(\ref{eq:boundary_noise_sum_rule}).
This separation is standard in influence-functional and open-system descriptions \cite{FeynmanVernon1963,Kubo1957,HuPazZhang1992}; in black-hole effective theory closely related structures appear in Refs.~\cite{GoldbergerRothstein2006,GoldbergerRothstein2020,CaronHuotCheslerTeaney2011}.

With the Fourier transform $G(\omega)=\int_{-\infty}^{\infty}dt\,e^{i\omega t}G(t)$, the equilibrium KMS condition gives
\begin{equation}
G_H(\omega)
=-\coth\left(\frac{\beta\omega}{2}\right)
\ImPart G_R(\omega)
\label{eq:FDT}
\end{equation}
with the present Fourier convention and normalization.
Equivalently, for a bosonic input mode in a thermal state,
\begin{equation}
\left\langle c_\alpha^\dagger(\omega)c_\alpha(\omega')\right\rangle
=n_\alpha(\omega)\delta(\omega-\omega'),
\label{eq:thermal_occupation}
\end{equation}
where
\begin{equation}
n_\alpha(\omega)
=\frac{1}{e^{\beta\omega}-1}
\label{eq:planck_n}
\end{equation}
for a KMS bath with respect to the same Killing time.
The symmetrized single-mode variance is then
\begin{equation}
n_\alpha(\omega)+\frac12
=\frac12\coth\left(\frac{\beta\omega}{2}\right).
\label{eq:sym_kms}
\end{equation}

We emphasize that the commutator relation is a unitarity or commutator-preservation statement, whereas Eq.~(\ref{eq:FDT}) is a state-dependent equilibrium relation.
They play different roles and should not be conflated.

\subsection{Influence action}

For completeness, consider a coupling
\begin{equation}
H_{\rm coup}(t)=-\lambda q(t)O
\end{equation}
between an exterior amplitude and a hidden operator.
On the closed time path, define
\begin{equation}
q_c=\frac{q_++q_-}{2},
\qquad
q_\Delta=q_+-q_-.
\end{equation}
To quadratic order, integrating out the hidden sector gives the standard influence action
\begin{eqnarray}
S_{\rm IF}^{(2)}
&=&
-\lambda^2\int dt\,dt'\,
q_\Delta(t)G_R(t-t')q_c(t')
\nonumber\\
&&
+\frac{i\lambda^2}{2}\int dt\,dt'\,
q_\Delta(t)G_H(t-t')q_\Delta(t').
\label{eq:influence_action}
\end{eqnarray}
The real causal term modifies the coherent equation of motion, while the imaginary term generates fluctuations and decoherence.
This standard separation explains why a response-generated coherent reflectivity and a quantum noise kernel are distinct data before equilibrium conditions are imposed.

\section{General inputs and asymptotic commutators}
\label{app:general_channel}

\subsection{General linear input-output relation}

We first consider the isolated horizon component with independent incident and hidden input fields.
All frequencywise equations below are exact in the spectral representation after removing a common $\delta(\omega-\omega')$.
Narrow packets give the corresponding approximation described after Eq.~(\ref{eq:stationary_energy_flux}).
A stationary linear completion in the signed-frequency sector considered here has the form
\begin{equation}
a_+(\omega)
={\cal R}_H(\omega)a_-(\omega)
+\sum_\alpha {\cal L}_{H\alpha}(\omega)c_\alpha(\omega)
+\sum_\mu {\cal M}_{H\mu}(\omega)\widetilde d_\mu^\dagger(\omega).
\label{eq:horizon_bogoliubov}
\end{equation}
The $c_\alpha$ are annihilation-type hidden inputs.
In a signed-frequency representation the partner term is $d_\mu^\dagger(-\omega)$; we denote the same positive-norm partner operator by
\begin{equation}
\widetilde d_\mu^\dagger(\omega)
\equiv d_\mu^\dagger(-\omega),
\qquad \omega>0.
\label{eq:partner_frequency_definition}
\end{equation}
Equivalently, one may label it by a positive partner energy with reversed Killing-time orientation.
This convention makes the opposite signed Killing frequency explicit without treating a negative-norm mode as an independent positive-frequency annihilator.
The $\widetilde d_\mu^\dagger$ terms allow an active or Bogoliubov component.
Such creation-operator mixing is required in phase-insensitive linear amplification and is natural in quantum-field-theoretic black-hole channels \cite{Caves1982,BradlerAdami2015}.

For this isolated component, assume canonical bosonic commutators and vanishing cross commutators among its incident and hidden inputs.
The formula does not include a general frequency-mixing or phase-sensitive transformation of the exterior input.
Preservation of the output commutator gives
\begin{equation}
|{\cal R}_H|^2
+\sum_\alpha|{\cal L}_{H\alpha}|^2
-\sum_\mu|{\cal M}_{H\mu}|^2
=1.
\label{eq:horizon_commutator}
\end{equation}

The minus sign is the canonical signature of the creation-operator inputs.
Equation~(\ref{eq:horizon_commutator}) is local and does not assume thermal equilibrium.

The passive multi-input case sets all ${\cal M}_{H\mu}$ to zero, giving $\sum_\alpha|{\cal L}_{H\alpha}|^2=1-|{\cal R}_H|^2$.
In the cavity, ${\cal L}_{{\rm eff},\alpha}={\cal F}{\cal L}_{H\alpha}$ and
\begin{equation}
{\cal M}_{{\rm eff},\mu}={\cal F}{\cal M}_{H\mu}.
\label{eq:Meff}
\end{equation}
For nonzero ${\cal K}$, the common filter therefore gives
\begin{equation}
\frac{\sum_\alpha|{\cal L}_{{\rm eff},\alpha}|^2-\sum_\mu|{\cal M}_{{\rm eff},\mu}|^2}{|{\cal K}|^2}
=\frac{1-|{\cal R}_H|^2}{|{\cal R}_H|^2}.
\label{eq:bog_central_ratio}
\end{equation}
This is an inner-plane identity, not a barrier-independent ratio for an incident quantum port at infinity.

\subsection{Full output and its covariance}

The general asymptotic map is
\begin{equation}
a_{\out}={\cal S}_{\rm tot}a_{\rm in}
+\sum_\alpha{\cal L}_{{\rm eff},\alpha}c_\alpha
+\sum_\mu{\cal M}_{{\rm eff},\mu}\widetilde d_\mu^\dagger.
\label{eq:general_asymptotic_map}
\end{equation}
It obeys the standard full-system identity \cite{Beenakker1998}
\begin{equation}
|{\cal S}_{\rm tot}|^2+\sum_\alpha|{\cal L}_{{\rm eff},\alpha}|^2
-\sum_\mu|{\cal M}_{{\rm eff},\mu}|^2=1.
\label{eq:asymptotic_commutator}
\end{equation}
To verify this, set $r={\cal R}_{\bh}$, $q={\cal R}_H e^{-2i\omega x_0}$, and $\Delta_S=\det S_{\bh}$.
Barrier unitarity gives $|\Delta_S|=1$, ${\cal R}_\infty=\Delta_S r^*$, and $|{\cal T}_{\bh}|^2=1-|r|^2$, so
\begin{equation}
{\cal S}_{\rm tot}=\Delta_S\frac{r^*-q}{1-rq}.
\label{eq:Stot_unitary_form}
\end{equation}
Subtracting its modulus squared from one proves Eq.~(\ref{eq:Stot_absorption_identity}).
Combining that identity with Eq.~(\ref{eq:horizon_commutator}) proves Eq.~(\ref{eq:asymptotic_commutator}).
Only the independent external inputs enter this calculation.

For a general Bogoliubov completion, the normally ordered horizon contribution depends on the state as well as the commutator relation.
For independent stationary hidden inputs with zero means and no correlations between distinct channels,
\begin{equation}
N_{\out}^{H}
=|{\cal F}|^2
\left[
\sum_\alpha|{\cal L}_{H\alpha}|^2 n_{c,\alpha}
+\sum_\mu|{\cal M}_{H\mu}|^2(n_{\widetilde d,\mu}+1)
\right].
\label{eq:active_Nout}
\end{equation}
Here $n_{\widetilde d,\mu}=\langle\widetilde d_\mu^\dagger\widetilde d_\mu\rangle$ is the occupation of the positive-norm partner defined in Eq.~(\ref{eq:partner_frequency_definition}).
Thus Eq.~(\ref{eq:bog_central_ratio}) alone does not determine the positive noise power in an active channel.
This is why the stimulated-Hawking interpretation of a reflectivity should not automatically be identified with a passive beam-splitter model.

Including an uncorrelated asymptotic input with occupation $n_\infty$, the complete normally ordered output is
\begin{eqnarray}
N_{\out}
&=&|{\cal S}_{\rm tot}|^2n_\infty
+|{\cal F}|^2
\left[
\sum_\alpha|{\cal L}_{H\alpha}|^2n_{c,\alpha}
\right.\nonumber\\
&&\left.
+\sum_\mu|{\cal M}_{H\mu}|^2(n_{\widetilde d,\mu}+1)
\right].
\label{eq:complete_output_occupation}
\end{eqnarray}

The commutator relation fixes a signed difference of transfer strengths, whereas Eq.~(\ref{eq:active_Nout}) fixes positive noise only after the input occupations are specified.
Correlated inputs require cross covariances as well.
In the passive single-input case, the horizon part of the symmetrized covariance is
\begin{equation}
{\cal S}_{\out}^{H}=|{\cal L}_{\rm eff}|^2(n_H+1/2),
\label{eq:Sout_def}
\end{equation}
which must be combined with the incident-port vacuum contribution to obtain Eq.~(\ref{eq:complete_passive_covariance}).

\subsection{Environment output}
The unitary realization in Appendix~\ref{app:rational} includes an unobserved environment output.
For a pure environment input, tracing out the exterior gives a complementary channel.
For a thermal or other mixed input it gives only a weakly complementary channel; a Stinespring complement additionally retains a purifier \cite{CarusoGiovannettiHolevo2006}.
This reduced-channel construction describes absorption within the local model and does not specify the subsequent evolution of the hidden sector during evaporation.

\section{Finite-plane energy estimate and feedback stability}
\label{app:stability}

For the matched exterior problem, the strict return bound can be proved without assigning a separate propagation delay to the asymptotic barrier coefficient.
Put $s=-iz$, so that $\RePart s>0$, and let $\psi_s$ be the exterior solution decaying at infinity.
It solves $-\psi_s''+(V+s^2)\psi_s=0$ on $[x_0,\infty)$.
Before division by the boundary value, integration by parts gives
\begin{equation}
-\frac{\psi_s'(x_0)\psi_s(x_0)^*}{s}
=\frac{1}{s}\int_{x_0}^{\infty}(|\psi_s'|^2+V|\psi_s|^2)\,dx
+s\int_{x_0}^{\infty}|\psi_s|^2\,dx.
\label{eq:exterior_undivided_energy}
\end{equation}
For a nonnegative exterior potential and a nontrivial solution, the real part of the right-hand side is strictly positive when $\RePart s>0$.
Hence $\psi_s(x_0)$ cannot vanish, and the impedance
\begin{equation}
Z(s)=-\frac{\psi_s'(x_0)}{s\psi_s(x_0)}
\label{eq:exterior_impedance}
\end{equation}
is well defined.
Taking the real part after division by $|\psi_s(x_0)|^2$ gives
\begin{eqnarray}
\RePart Z(s)
&=&\frac{\RePart s}{|\psi_s(x_0)|^2}
\left[
\frac{1}{|s|^2}\int_{x_0}^{\infty}
(|\psi_s'|^2+V|\psi_s|^2)\,dx
\right.\nonumber\\
&&\left.
+\int_{x_0}^{\infty}|\psi_s|^2\,dx
\right]>0.
\label{eq:exterior_energy_identity}
\end{eqnarray}
The decaying solution can be chosen locally analytic in $s$; its nonvanishing boundary value makes $Z$ analytic in the right half-plane.
The return referred to this plane is
\begin{equation}
\widehat{\cal R}_{\bh,x_0}(is)
=\frac{1-Z(s)}{1+Z(s)},
\qquad
\left|\widehat{\cal R}_{\bh,x_0}(is)\right|<1.
\label{eq:exterior_return_Schur}
\end{equation}
This verifies the strict barrier bound used in Sec.~\ref{sec:composition} for the matched axial Schwarzschild model.
The essential object is the return at the finite plane.
An asymptotic reflection coefficient with its reference phases removed need not separately be a contractive function throughout the upper half-plane.
The energy argument and the causal decomposition of prompt and delayed Green functions address different aspects of the problem \cite{ZhangRen2026}.

For an admissible horizon response,
\begin{equation}
|\widehat{\cal R}_{\bh,x_0}(z){\cal R}_H(z)|<1,
\qquad \ImPart z>0,
\label{eq:Schur_assumption}
\end{equation}
so the feedback denominator has no upper-half-plane zero.
If the loop product is uniformly bounded by $q<1$, one also has $|D^{-1}|\leq(1-q)^{-1}$.
The rational model of Appendix~\ref{app:rational} has such a bound; the thermal completions approach unit reflectivity at zero frequency and require a separate threshold analysis for any uniform estimate.

\section{Local resonance and phase-crossing formulas}
\label{app:poles}
\label{sec:resonances}

\subsection{Common resonance poles}

Both the horizon-normalized coherent echo amplitude and the hidden-noise transfer contain
\begin{equation}
D(\omega)^{-1}.
\end{equation}
Therefore any simple zero $\omega_n$ of $D$ that is not canceled by a numerator is a pole of both ${\cal K}$ and ${\cal L}_{{\rm eff},\alpha}$.
This gives a direct resonance-level version of the consistency relation and complements the source-dependent residue analysis of Ref.~\cite{KaminagaEcho2026}.

Let $\omega_n$ be a simple zero of $D(\omega)$ in a domain where ${\cal R}_{\bh}$, ${\cal T}_{\bh}$, ${\cal R}_H$, and ${\cal L}_{H\alpha}$ are holomorphic and finite.
If
\begin{equation}
{\cal T}_{\bh}(\omega_n){\cal R}_H(\omega_n)\neq0
\end{equation}
and ${\cal L}_{H\alpha}(\omega_n)\neq0$ for at least one hidden channel, then $\omega_n$ is a common simple pole of the horizon-normalized coherent echo amplitude and of the corresponding hidden-noise transfer amplitude.

The residues differ through the local factors ${\cal R}_H$ and ${\cal L}_{H\alpha}$ multiplying the same filter ${\cal F}$.
The propagation phase $e^{-i\omega x_0}$ is already included in ${\cal F}$.
Analytic continuation of the real-axis modulus identity alone would not be sufficient for this statement; the causal analytic hypothesis on ${\cal L}_{H\alpha}$ is essential.
We do not impose positivity or modulus relations directly at complex frequency.
In the passive case, the corresponding real-axis transfer strengths obey Eq.~(\ref{eq:central_ratio}), and a narrow resonance enhances both coherent return and hidden noise with the same denominator.

\subsection{Pole displacement by a response perturbation}

Let a reference inner response ${\cal R}_{H,0}$ define
\begin{equation}
D_0(\omega)=1-Q_0(\omega),
\qquad
Q_0(\omega)
={\cal R}_{\bh}(\omega)
{\cal R}_{H,0}(\omega)e^{-2i\omega x_0}.
\label{eq:D0_Q0}
\end{equation}
Consider
\begin{equation}
{\cal R}_H(\omega;\epsilon)
={\cal R}_{H,0}(\omega)e^{\epsilon h(\omega)}.
\label{eq:RH_deformation}
\end{equation}

Suppose that $Q_0$ and $h$ are holomorphic near $\omega_n^{(0)}$ and that $\omega_n^{(0)}$ is a simple zero of $D_0$.
For sufficiently small $\epsilon$, the continued pole satisfies
\begin{equation}
\omega_n(\epsilon)
=\omega_n^{(0)}
+\epsilon\delta\omega_n+O(\epsilon^2),
\end{equation}
with
\begin{equation}
\delta\omega_n
=-
\frac{h(\omega_n^{(0)})}
{\left.
\dfrac{d}{d\omega}\log Q_0(\omega)
\right|_{\omega=\omega_n^{(0)}}}.
\label{eq:pole_shift}
\end{equation}

The deformed denominator is
\begin{equation}
D(\omega,\epsilon)
=1-Q_0(\omega)e^{\epsilon h(\omega)}.
\end{equation}
Expanding around $(\omega_n^{(0)},0)$ and using $Q_0(\omega_n^{(0)})=1$ gives Eq.~(\ref{eq:pole_shift}).

For a long cavity of length $L=-x_0$, if the propagation phase dominates the logarithmic derivative,
\begin{equation}
\frac{d}{d\omega}\log Q_0(\omega)\simeq 2iL,
\end{equation}
then
\begin{equation}
\delta\omega_n
\simeq
\frac{i}{2L}h(\omega_n^{(0)}).
\label{eq:long_cavity_shift}
\end{equation}
Thus a response perturbation shifts the common transfer pole.
A real-axis spectral maximum can also depend on the frequency variation of the numerator and the input occupation, so its displacement need not equal the pole displacement.
For a physical response family, passivity and upper-half-plane analyticity must be checked separately for the chosen $h$.

\subsection{Real-axis phase crossings}
For the smooth completion and a simple crossing $\Theta_c(\omega_n^{\rm ph})=2\pi n$, put $\eta=\log(\omega_c/\mu)$ for a fixed reference frequency $\mu>0$.
The scaling in Eq.~(\ref{eq:UV_phase_integral}) gives
\begin{equation}
\partial_\eta\phi_c=\phi_c-\omega\tau_H.
\label{eq:UV_phase_cutoff_derivative}
\end{equation}
At a simple phase crossing, implicit differentiation therefore yields
\begin{equation}
\frac{d\omega_n^{\rm ph}}{d\eta}
=-\frac{\phi_c-\omega\tau_H}
{2L+\partial_\omega\arg{\cal R}_{\bh}+\tau_H}
\bigg|_{\omega=\omega_n^{\rm ph}} .
\label{eq:UV_interference_shift}
\end{equation}
The denominator is the derivative of the complete round-trip phase.
The cutoff changes both this phase and the loop modulus; the numerical calculation in Sec.~\ref{sec:UV_numerics} retains both effects.

\section{A rational causal unitary realization}
\label{app:rational}

\subsection{Causal resonant reflectivity and analytic loss port}

For a concrete stable response model, define the reflection amplitude directly by
\begin{equation}
{\cal R}_H(\omega)
=
\rho\frac{2\gamma\Omega}
{\Omega^2+\gamma^2-\omega^2-2i\gamma\omega},
\qquad
0<\rho<1,
\qquad
0<\gamma<\Omega.
\label{eq:stable_RH}
\end{equation}
Its only poles are
\begin{equation}
\omega=\pm\Omega-i\gamma,
\label{eq:stable_RH_poles}
\end{equation}
so it is analytic in the upper half-plane and satisfies the reality condition in Eq.~(\ref{eq:RH_reality}).
On the real axis,
\begin{equation}
|{\cal R}_H(\omega)|^2
=
\rho^2
\frac{4\gamma^2\Omega^2}
{(\Omega^2+\gamma^2-\omega^2)^2
+4\gamma^2\omega^2}
\leq\rho^2.
\label{eq:stable_RH_bound}
\end{equation}
The maximum is attained at $\omega^2=\Omega^2-\gamma^2$.
The corresponding admittance and self-energy are obtained from Eqs.~(\ref{eq:Lambda_from_RH}) and (\ref{eq:Sigma_from_RH}).
Since $|{\cal R}_H(z)|<1$ in the upper half-plane, $1+{\cal R}_H(z)$ has no zero there.
This construction therefore satisfies both the real-axis passivity bound and the dynamical stability condition.

The passive loss amplitude can also be chosen analytically.
Define
\begin{eqnarray}
A_0&=&\Omega^2+\gamma^2,
\nonumber\\
g_0&=&2\rho\gamma\Omega,
\nonumber\\
B_0&=&\sqrt{A_0^2-g_0^2},
\nonumber\\
\gamma_L&=&
\left(
\frac{B_0+\gamma^2-\Omega^2}{2}
\right)^{1/2}.
\label{eq:spectral_factor_parameters}
\end{eqnarray}
For $0<\rho<1$, $B_0>|\Omega^2-\gamma^2|$ and $\gamma_L$ is real and positive.
The function
\begin{equation}
{\cal L}_H(\omega)
=
\frac{B_0-\omega^2-2i\gamma_L\omega}
{A_0-\omega^2-2i\gamma\omega}
\label{eq:analytic_LH}
\end{equation}
has no upper-half-plane pole and obeys
\begin{equation}
|A_0-\omega^2-2i\gamma\omega|^2-g_0^2
=
|B_0-\omega^2-2i\gamma_L\omega|^2
\label{eq:spectral_factor_identity}
\end{equation}
for real $\omega$.
Therefore
\begin{equation}
|{\cal R}_H(\omega)|^2
+|{\cal L}_H(\omega)|^2
=1.
\label{eq:explicit_dilation_identity}
\end{equation}
For this rational example, an environment output can be included explicitly.
Define $d_H(z)=A_0-z^2-2i\gamma z$, $P_L(z)=B_0-z^2-2i\gamma_Lz$, and $P_L^\sharp(z)=B_0-z^2+2i\gamma_Lz$.
The two-port completion is
\begin{equation}
\left(\begin{array}{c}a_+\\c_{\rm out}\end{array}\right)
=S_H(z)\left(\begin{array}{c}a_-\\c\end{array}\right),
\qquad
S_H(z)=\frac{1}{d_H(z)}
\left(\begin{array}{cc}
g_0&P_L(z)\\-P_L^\sharp(z)&g_0
\end{array}\right).
\label{eq:rational_horizon_completion}
\end{equation}
Its entries are proper rational functions with no upper-half-plane pole.
For real frequency, $P_L^\sharp=P_L^*$, and Eq.~(\ref{eq:spectral_factor_identity}) gives
\begin{equation}
S_H(\omega)S_H(\omega)^\dagger=I.
\label{eq:rational_horizon_unitarity}
\end{equation}
Thus this example has a bounded causal unitary scattering completion, including the unobserved output $c_{\rm out}$.
It is an effective realization used to test the general echo composition and is distinct from the ultraviolet-completed Boltzmann model in Eq.~(\ref{eq:UV_flux_reflectivity}).

We use
\begin{equation}
M\Omega=0.35,
\qquad
M\gamma=0.060,
\qquad
\rho=0.58,
\qquad
x_0/M=-30.
\label{eq:numerical_parameters}
\end{equation}

\subsection{Channel checks with the axial barrier}
We use the axial barrier of Sec.~\ref{sec:numerics} and the integration procedure in Appendix~\ref{app:numerical}.

Figure~\ref{fig:coherent} shows the barrier transmission, the stable near-horizon reflectivity, the inner-normalized echo transfer, and the complete asymptotic channel.
The two panels distinguish the common-filter identity from the interference between prompt reflection and echoes required by asymptotic unitarity.

\begin{figure}[htbp]
\centering
\includegraphics[width=0.82\linewidth]{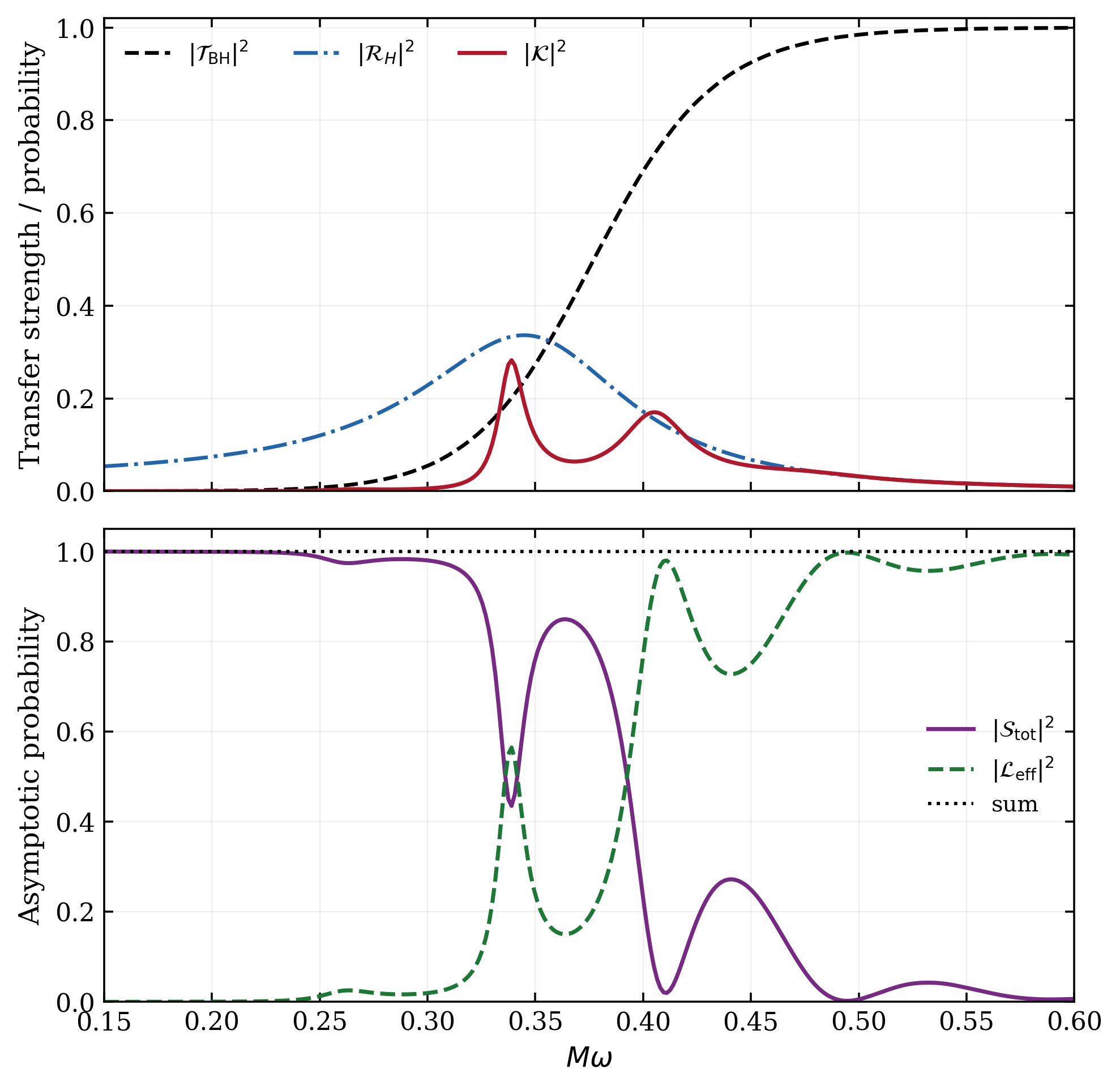}
\caption{Axial $s=2$, $\ell=2$ Schwarzschild example.
The upper panel shows the barrier transmission probability, the stable near-horizon reflection probability generated by Eq.~(\ref{eq:stable_RH}), and the inner-normalized coherent transfer.
The lower panel shows the complete asymptotic reflection $|{\cal S}_{\rm tot}|^2$ and the horizon-port transfer $|{\cal L}_{\rm eff}|^2$; their sum is unity within the numerical accuracy.
The parameters are given in Eq.~(\ref{eq:numerical_parameters}).
The plot was produced using AI-assisted Python code described in Appendix~\ref{app:numerical}.}
\label{fig:coherent}
\end{figure}

Representative values are
\begin{table}[tbp]
\caption{Representative dimensionless scattering quantities for the axial $s=2$, $\ell=2$ example with the stated finite integration boundaries.
The displayed digits specify the numerical setup; they do not imply five-digit accuracy for the infinite-domain problem.}
\label{tab:numerics}
\centering
\begin{tabular}{ccccc}
\toprule
$M\omega$ & $|{\cal R}_{\bh}|^2$ & $|{\cal T}_{\bh}|^2$ & $|{\cal R}_H|^2$ & $|{\cal K}|^2$\\
\midrule
0.30 & 0.94528 & 0.05472 & 0.22830 & 0.00583\\
0.35 & 0.72661 & 0.27339 & 0.33395 & 0.11971\\
0.40 & 0.30978 & 0.69022 & 0.17184 & 0.15950\\
0.45 & 0.07571 & 0.92429 & 0.06780 & 0.05458\\
\bottomrule
\end{tabular}

\end{table}

\begin{table}[tbp]
\caption{Complete passive-channel check for the axial $s=2$, $\ell=2$ example.}
\label{tab:complete_channel}
\centering
\begin{tabular}{cccc}
\toprule
$M\omega$ & $|{\cal S}_{\rm tot}|^2$ & $|{\cal L}_{\rm eff}|^2$ & sum\\
\midrule
0.30 & 0.98030 & 0.01970 & 1.00000000\\
0.35 & 0.76124 & 0.23876 & 1.00000000\\
0.40 & 0.23134 & 0.76866 & 1.00000000\\
0.45 & 0.24952 & 0.75048 & 1.00000001\\
\bottomrule
\end{tabular}

\end{table}

\subsection{Numerical channel checks}

For the single-channel completion in Eq.~(\ref{eq:analytic_LH}), the transfer from the hidden input to infinity is
\begin{equation}
|{\cal L}_{\rm eff}|^2
=
\frac{|{\cal T}_{\bh}|^2
(1-|{\cal R}_H|^2)}{|D|^2}.
\label{eq:numerical_Leff}
\end{equation}
The data used in Fig.~\ref{fig:coherent} satisfy Eq.~(\ref{eq:central_ratio}) pointwise.
We also reconstruct the reflection amplitude from the infinity side using the reciprocity and unitarity phases of Eq.~(\ref{eq:barrier_two_port}).
The resulting complete channel satisfies
\begin{equation}
\left|
|{\cal S}_{\rm tot}|^2+|{\cal L}_{\rm eff}|^2-1
\right|<5\times10^{-8}
\label{eq:numerical_full_unitarity}
\end{equation}
on the frequency grid.
Thus the calculation checks not only the inner transfer ratio but also the interference between prompt reflection and echoes required by the observable asymptotic relation in Eq.~(\ref{eq:Stot_absorption_identity}).
In contrast to a frequencywise square-root choice, Eq.~(\ref{eq:analytic_LH}) fixes a stable causal phase for the loss port.
State dependence then enters through Eq.~(\ref{eq:complete_output_occupation}), after the channel itself has been fixed.

\section{Numerical scattering normalization}
\label{app:numerical}

The Python script \path{generate_bhmap15_figures.py}, used for the numerical calculations and Figs.~\ref{fig:boltzmann_uv}, \ref{fig:UV_quantum_noise}, \ref{fig:UV_phase_cavity}, and \ref{fig:coherent}, was developed using code generated by OpenAI ChatGPT under the author's direction.
The numerical results were checked against the analytical phase formulas, flux-conservation and channel-unitarity identities, and the boundary and grid sensitivity tests described below.
The script is available as ancillary material with the arXiv version of this paper.

With $M=1$, the tortoise coordinate is
\begin{equation}
r_*=r+2\log\left(\frac{r}{2}-1\right).
\label{eq:tortoise_explicit}
\end{equation}
Its inverse can be written using the Lambert function,
\begin{equation}
r(r_*)
=2\left[
1+W\left(e^{r_*/2-1}\right)
\right].
\label{eq:lambert_inverse}
\end{equation}
The outgoing boundary condition at $x_+=140$ is
\begin{equation}
\psi(x_+)=e^{i\omega x_+},
\qquad
\psi'(x_+)=i\omega e^{i\omega x_+}.
\end{equation}
\subsection{Barrier integration and sensitivity checks}

The first-order complex system is integrated with the eighth-order DOP853 method using relative tolerance $2\times10^{-9}$ and absolute tolerance $2\times10^{-11}$.
The scattering spectra use 301 equally spaced frequencies in $0.15\leq M\omega\leq0.60$; Fig.~\ref{fig:boltzmann_uv}, which does not require barrier integration, uses 500 points in $0.002\leq M\omega\leq0.60$.
After inward integration to $x_-=-50$, the near-horizon coefficients are extracted as
\begin{eqnarray}
B_{\rm out}
&=&
\frac12e^{-i\omega x_-}
\left[
\psi(x_-)+\frac{\psi'(x_-)}{i\omega}
\right],
\label{eq:Bout_extract}
\\
B_{\rm in}
&=&
\frac12e^{i\omega x_-}
\left[
\psi(x_-)-\frac{\psi'(x_-)}{i\omega}
\right].
\label{eq:Bin_extract}
\end{eqnarray}
The amplitudes in Eq.~(\ref{eq:BH_RT}) follow directly.
Reciprocity gives ${\cal T}_{\rm in}^{(0)}={\cal T}_{\bh}$ in this reference-plane convention, and barrier unitarity gives
\begin{equation}
{\cal R}_{\infty}
=-
\frac{{\cal T}_{\bh}{\cal R}_{\bh}^*}
{{\cal T}_{\bh}^*}.
\label{eq:Rinfinity_reconstruction}
\end{equation}
This reconstructs the complete asymptotic amplitude in Eq.~(\ref{eq:Stot}).
The values in Tables~\ref{tab:numerics} and \ref{tab:complete_channel}, together with Eqs.~(\ref{eq:numerical_parameters}) and (\ref{eq:Bout_extract})--(\ref{eq:Rinfinity_reconstruction}), specify the normalization and unitarity checks used here.

The plane-wave condition at $x_+$ neglects the exterior potential tail.
Conservation of the Wronskian does not measure this truncation error.
As a boundary-sensitivity check, we repeated the 301-frequency calculation with $x_+=280$ and $560$, keeping the other parameters and tolerances fixed.
Relative to $x_+=140$, the maximum relative changes of each of $|{\cal T}_{\bh}|^2$, $|{\cal K}|^2$, and $|{\cal L}_{\rm eff}|^2$ were below $0.73\%$ and $0.67\%$, respectively.
The figures and tables retain the $x_+=140$ calculation.
These comparisons support the illustrative spectra, but are not certified error bounds for the infinite-domain solution.
The small defects in the flux and channel identities should be interpreted separately.

The finite inner matching approximation is much smaller on this grid.
For $M=1$, $V_{22}(-30)\simeq8.44\times10^{-8}$.
Extracting the free-wave amplitudes at $x_-=-30$, instead of $-50$, changes $|{\cal T}_{\bh}|^2$ by less than $8\times10^{-7}$ relatively and changes $|{\cal K}|^2$ and $|{\cal L}_{\rm eff}|^2$ by less than $1.4\times10^{-6}$.
This is a direct comparison with the matched exterior model used in the finite-plane energy argument.

For Fig.~\ref{fig:UV_quantum_noise}, Eq.~(\ref{eq:outer_phase_quadrature}) is evaluated separately on $[0,\omega]$ and $[\omega,\infty)$ by adaptive quadrature.
At the removable point the integrand is assigned the analytic limit
\begin{equation}
\lim_{\nu\to\omega}
\frac{u_R(\nu)-u_R(\omega)}{\omega^2-\nu^2}
=-\frac{u_R'(\omega)}{2\omega}.
\label{eq:outer_phase_removable_limit}
\end{equation}
The relative and absolute quadrature targets are $2\times10^{-9}$ and $2\times10^{-10}$, respectively.
The same 301-frequency grid and barrier amplitudes are then used in Eq.~(\ref{eq:UV_asymptotic_noise_spectrum}).

For the cutoff comparison, Eq.~(\ref{eq:UV_phase_integral}) is evaluated independently of the subtracted integral in Eq.~(\ref{eq:outer_phase_quadrature}).
On the displayed band the convergent series in Eq.~(\ref{eq:UV_phase_series}) provides a further check.
Phase crossings are found by a bracketed scalar root solve with a cubic-spline interpolation of the continuous barrier phase on the same frequency grid.
Increasing $x_+$ from $140$ to $280$ or $560$ changes the phase crossings in Table~\ref{tab:UV_crossings} by less than $1.5\times10^{-6}$ in $M\omega$.
Increasing the frequency grid from $301$ to $601$ points changes them by less than $10^{-9}$.
The smaller delay-aligned differences require a paired check, since an individual crossing's boundary sensitivity can be comparable to the residual shift.
Recomputing both responses with each outer boundary changes the last-column differences in Table~\ref{tab:UV_crossings} by less than $9.6\times10^{-10}$ in $M\omega$.
Changing the inner extraction point or using 601 frequencies changes those differences by less than $1.6\times10^{-12}$.
Across these checks, the maximum relative delay-aligned cavity difference changes by less than $2.1\times10^{-9}$ in absolute value.
The residual phase is evaluated directly from the convergent difference series with the $j=0$ term removed, avoiding subtraction of the two large phases; the two evaluations agree to within $5\times10^{-15}$ rad at the three verification frequencies.
These are sensitivity checks, not certified error bounds.
The group delays in Table~\ref{tab:UV_phase} use the exact expression in Eq.~(\ref{eq:UV_group_delay}).

\begin{table}[htbp]
\caption{Constructive phase-alignment frequencies and their shifts.
The unaligned shift is $M(\omega_{n,40/M}^{\rm ph}-\omega_{n,20/M}^{\rm ph})$.
The last column gives $10^6 M(\omega_{n,40/M}^{\rm ph}-\omega_{n,20/M}^{\rm ph,(d)})$, where the lower-cutoff response receives the positive delay in Eq.~(\ref{eq:UV_delayed_response}).
The displayed digits describe the finite-domain calculation.}
\label{tab:UV_crossings}
\centering
\begin{tabular}{ccccc}
\toprule
$n$ & $M\omega_c=20$ & $M\omega_c=40$ & unaligned shift & aligned shift $\times10^6$\\
\midrule
4 & 0.21483 & 0.20339 & $-0.01144$ & $1.90$\\
5 & 0.27798 & 0.26343 & $-0.01455$ & $3.84$\\
6 & 0.33917 & 0.32204 & $-0.01713$ & $6.47$\\
\bottomrule
\end{tabular}

\end{table}

\subsection{Matching-scale phase plot}
\label{app:matching_plot}

Figure~\ref{fig:matching_scale} uses $M=1$, $M\omega=0.35$, $M\Omega_{\rm UV}=M\Omega_b=10^4$, and $0.002\leq\zeta\leq0.02$.
Thus both redshifted edges range approximately from $20/M$ to $200/M$, and $\omega/\omega_{c,b}<0.018$ throughout.
At each point, $y$ is obtained from $[y\sqrt{1+y^2}+\operatorname{arcsinh}y]/2=\zeta$, and the exact expressions in Eqs.~(\ref{eq:redshift_y}) and (\ref{eq:x0_y_exact}) determine the cutoff and tortoise position.

The phases are evaluated from Eqs.~(\ref{eq:UV_phase_series}) and (\ref{eq:band_phase_series}).
The plotted smooth residual is computed by separating the $j=0$ term analytically:
\begin{eqnarray}
\Phi_{\rm in}-\Phi_{0,c}
&=&-4M\omega\left[y^2+\log(1+y^2)\right]\nonumber\\
&&+\frac{\beta_H\omega}{\pi}
\sum_{j=1}^{\infty}\left(\frac{\omega}{\omega_c}\right)^{2j}
\left[\frac{\log(\omega_c/\omega)}{2j+1}+\frac{1}{(2j+1)^2}\right].
\label{eq:matching_plot_residual}
\end{eqnarray}
The first line follows exactly from Eqs.~(\ref{eq:redshift_y}) and (\ref{eq:x0_y_exact}).
The exact-band residual uses the same geometric term and the negative series in Eq.~(\ref{eq:band_phase_series}).
The source embeds 81 logarithmically spaced samples.
No barrier solution, fitted delay, or phase alignment is used in this plot.
The comparison therefore tests the near-horizon part of the specified family directly, rather than the complete finite-plane scattering spectrum.

\end{document}